\documentclass[aps,prx,reprint,twocolumn,superscriptaddress,nofootinbib]{revtex4-2}

\usepackage{amsthm,amsmath,amssymb,amsfonts,amsthm}
\usepackage{tikz}
\usetikzlibrary{positioning, arrows.meta, shapes.geometric, calc, fit,backgrounds}

\usepackage{graphicx}
\usepackage{hyperref}
\usepackage{algorithm}
\usepackage{algpseudocode}
\usepackage{listings}
\usepackage{xcolor}
\usepackage{booktabs}

\newtheorem{theorem}{Theorem}
\newtheorem{definition}[theorem]{Definition}

\newtheorem{proposition}[theorem]{Proposition}

\hypersetup{
  linkcolor  = red,
  citecolor  = blue,
  urlcolor   = purple,
  colorlinks = true,
}

\def\nn{\nonumber}
\def\q{\quad}
\def\qq{\qquad}

\def\sixj{ \{ { 6j} \} }

\def\su2{{\rm SU(2)}}
\def\Q{\mathbb Q}
\def\Z{\mathbb Z}
\def\R{\mathbb R}

\def\C{\mathbb C}

\begin{document}

\author{Seth K. Asante}
\email{seth.asante@unb.ca}
\affiliation{Department of Mathematics and Statistics,  University of New Brunswick, Fredericton, NB, E3B 5A3, Canada} 

\title{Deferred Cyclotomic Representation for Stable and Exact Evaluation of q-Hypergeometric Series}


\begin{abstract}

We introduce a cyclotomic representation for finite $q$-hypergeometric series and $q$-deformed amplitudes that separates algebraic structure from evaluation. By expressing each summand in a sparse exponent basis over irreducible cyclotomic polynomials, all products and ratios of quantum factorials reduce to integer vector arithmetic. This ensures that cancellations between numerator and denominator are resolved exactly prior to any evaluation. This formulation yields the \emph{deferred cyclotomic representation} (DCR), a parameter-independent combinatorial object of the series, from which evaluation in any target field is realized as a ring homomorphism.

For quantum recoupling coefficients, we demonstrate that this framework achieves linear memory scaling in the compilation phase, eliminates intermediate expression swell in exact arithmetic, and substantially extends the range of reliable double-precision computation by reducing cancellation-induced error amplification. Beyond its computational advantages, the DCR provides a unified perspective on $q$-deformed amplitudes. Structural properties like admissibility at roots of unity, and the classical limit all emerge as intrinsic properties of a single underlying combinatorial object.

\end{abstract}

\maketitle

\section{Introduction}
\label{sec:introduction}

Finite alternating $q$-hypergeometric series arise in a wide range of applications including representation theory, quantum topology, and mathematical physics. They encode the algebraic structure of quantum groups \cite{Drinfeld1985Hopf,kassel1995quantum}, appear in the construction of topological quantum field theories and knot invariants \cite{witten1989quantum,Turaev1992,kauffman1994temperley}, and govern recoupling coefficients such as quantum $6j$-symbols \cite{kirillov1989reps,chari1995guide}. These same series are used to define fundamental transition amplitudes in state-sum models of three-dimensional quantum gravity \cite{Ponzano1968,rovelli2004quantum,Oriti2011,Barrett:1993ab,Major:1995yz,Dupuis:2020ndx}, as well as in tensor network formulations of topological phases of matter \cite{Kitaev2003,etingof2015tensor,wen2004quantum}.

Despite their fundamental importance, the evaluation of $q$-hypergeometric series remains computationally challenging. Exact symbolic approaches suffer from severe intermediate \emph{expression swell} \cite{Geddes1992, vonZurGathen2013}, since the rational functions grow rapidly in size with the summation range. Numerical evaluation, on the other hand, is limited by \emph{catastrophic cancellation}, where alternating sums of exponentially large terms lead to substantial loss of precision  \cite{Higham2002, goldberg1991,ascher2011numerical}. These difficulties are particularly acute at roots of unity and in semiclassical regimes, where the \emph{condition number} of the summation grows rapidly with the parameters.

These limitations are not only due to algorithmic implementations, but also reflect the algebraic representation in which the amplitudes are expressed. In conventional approaches, $q$-hypergeometric series are expressed as dense rational or polynomial functions of the deformation parameter $q$. This representation obscures the underlying multiplicative structure of quantum factorials and forces cancellations to occur only after numerical or symbolic expansion. As a result, both exact and floating-point computations are exposed to unnecessary intermediate complexity.

In this work, we introduce a representation that is naturally adapted to multiplicative structure of finite $q$-hypergeometric series, and separates its algebraic structure from evaluation. The central idea is to express quantum integers and factorials through their cyclotomic factorization and to encode the resulting expressions as products of the irreducible cyclotomic generators $\{q,\Phi_d(q^2)\}_{d \ge 2}$. We refer to this construction as the \emph{deferred cyclotomic representation} (DCR). Within this representation, each $q$-hypergeometric summand is encoded as a sparse vector of integers corresponding to exponents of the cyclotomic basis. As a result, all multiplicative operations reduce to integer arithmetic.

The key conceptual shift is that the dependence on the deformation parameter $q$ is removed from the algebraic representation and reintroduced only at the level of evaluation. In this framework, a $q$-hypergeometric amplitude is not a function to be recomputed for each value of $q$, but the evaluation of a fixed combinatorial object under a family of projection maps. This perspective unifies exact symbolic computation, floating-point evaluation, and asymptotic limits within a single representation.

\medskip

\noindent
\textbf{Main contributions.}
We summarize the main contributions of this work as follows:
\begin{itemize}
\item[(1)] \emph{Cyclotomic representation.} We show that finite $q$-hypergeometric amplitudes admit a representation in a sparse cyclotomic exponent basis, revealing the multiplicative structure of quantum factorials.

\item[(2)] \emph{Deferred evaluation.} We introduce the deferred cyclotomic representation (DCR), which separates a one-time combinatorial compilation from subsequent evaluation via projection into a target field.

\item[(3)] \emph{Efficient construction.} We show that the construction of the DCR scales near-linearly in the summation length and avoids the expression swell inherent in polynomial representations.

\item[(4)] \emph{Numerical stability.} We show that the DCR performs algebraic cancellations prior to evaluation. This reduces the dynamic range of intermediate quantities and mitigate cancellation-driven loss of precision in floating-point arithmetic.

\item[(5)] \emph{Structural unification.} We demonstrate that admissibility at roots of unity, $q$-deformation, and classical limits manifest as intrinsic properties of a single DCR combinatorial object.
\end{itemize}

From a computational standpoint, this separation yields several advantages. By encoding all multiplicative structure at the level of integer exponents, the DCR eliminates intermediate polynomial growth and reduces memory usage in exact arithmetic. At the same time, it acts as an \emph{algebraic preconditioner} for numerical evaluation, performing exact cancellations prior to floating-point computation and thereby reducing error amplification. Once constructed, the representation can be reused across multiple parameter regimes, enabling efficient amortized computation over large parameter spaces.

Beyond its computational advantages, the DCR also provides a different perspective on the algebraic organization of $q$-deformed amplitudes. Because the exponent vectors are independent of both the deformation parameter and the target field, phenomena that are usually treated separately (including root-of-unity admissibility, classical limits, and evaluation in different arithmetic domains), are realized as different projections of a common combinatorial object. In this sense, the DCR separates combinatorial structure from arithmetic realization. This viewpoint suggests new connections between computational representations, cyclotomic arithmetic, and quantum topology, which we discuss in Section~\ref{sec:structure}.

More broadly, the DCR should not be viewed simply as an algorithm for evaluating a particular family of $q$-hypergeometric series. Rather, it provides a representation in which the combinatorial structure of the amplitudes is separated from their arithmetic realization. This separation simultaneously yields computational advantages, clarifies the algebraic organization of $q$-deformed amplitudes, and suggests new connections between computational mathematics, cyclotomic arithmetic, and quantum topology. Several of these structural consequences motivate directions for future investigation.

We develop this framework in detail and analyze its properties both theoretically and empirically. Using the quantum $6j$-symbol as a benchmark, we demonstrate linear memory scaling in the construction phase, improved numerical stability in floating-point arithmetic, and efficient amortized evaluation across continuous ranges of $q$. The $6j$-symbol provides a canonical test case in which factorial growth, oscillatory summation, and semiclassical behaviour are all present, making it a controlled and representative benchmark for both symbolic and numerical performance. Moreover, because the quantum $6j$-symbol forms the elementary building block of many state-sum and tensor-network constructions, improvements at this level directly benefit larger computational frameworks.

\medskip

The remainder of this paper is organised as follows. Section~\ref{sec:background} reviews the structure of $q$-hypergeometric series and the limitations of conventional evaluation methods. Section~\ref{sec:cyclo_architecture} introduces the cyclotomic representation and the deferred evaluation framework. Section~\ref{sec:universal_projections} defines projection maps and describes their implementation. Section~\ref{sec:structure} develops structural and algebraic
consequences of the representation. Section~\ref{sec:complexity_error} analyses computational complexity and numerical stability. Section~\ref{sec:numerical_results} presents empirical benchmarks. Section~\ref{sec:conclusion} states the conclusions and directions for future work.

\section{Preliminaries and Limitations of Polynomial Representations}
\label{sec:background}

The evaluation of $q$-hypergeometric series is often limited by intrinsic computational difficulties rather than implementation details. In particular, when evaluated at roots of unity, these series combine rapidly growing factorial terms with strongly oscillatory summation patterns. This interplay leads simultaneously to severe numerical cancellation in floating-point arithmetic and substantial intermediate expression growth in exact symbolic computation. As a result, standard ``eager'' evaluation strategies, whether numerical or symbolic, become unreliable or prohibitively expensive as problem sizes increase. 

These computational difficulties are not merely technical. They reflect a deeper representation-level mismatch. In conventional approaches, $q$-hypergeometric series are expressed as dense polynomial or rational functions in the deformation parameter $q$. However, this representation is not intrinsic to the multiplicative structure of quantum integers and factorials. In particular, it obscures their underlying cyclotomic factorization and forces cancellations to occur only after full expansion.

This misalignment has two immediate consequences. First, both symbolic and numerical methods are forced to operate on representations that artificially inflate intermediate complexity. Second, the cost of evaluation becomes dominated not by the combinatorics of the sum itself, but by the overhead introduced by the chosen algebraic form. From this perspective, the primary obstacle is not algorithmic but structural. The polynomial representation fails to respect the natural algebraic organization of the series.

In this section, we formalize the structure of $q$-hypergeometric series, introduce the quantum $6j$-symbol as a canonical example, and analyze the limitations of conventional evaluation strategies. This analysis will motivate the need for a representation that preserves multiplicative structure and enables cancellation prior to expansion.

\subsection{Algebraic Structure of $q$-Hypergeometric Series}
\label{subsec:q_hypergeometric_structure}

In the representation theory of quantum groups and in quantum topology \cite{ReshetikhinTuraev1991,chari1995guide}, the basic building blocks of amplitudes are the quantum integers
\begin{equation}
[n]_q = \frac{q^n - q^{-n}}{q - q^{-1}}, \qq n \in \Z,
\end{equation}
and the associated quantum factorials $[n]_q! = \prod_{m=1}^n [m]_q$, with $[0]_q! = 1$, where $q$ is treated as a formal parameter. 

A broad class of quantities of interest and invariants can be written as finite sums of the form
\begin{equation}\label{eq:generic_q_series}
S(q) = \sum_{z = z_{\min}}^{z_{\max}} (-1)^z \, q^{f(z)} 
\frac{\prod_i [A_i(z)]_q!}{\prod_j [B_j(z)]_q!},
\end{equation}
where $A_i(z)$ and $B_j(z)$ are affine functions of the summation variable $z$ and of external parameters (e.g. angular momentum spins $j$), and $f(z)$ is a linear or quadratic phase. The summation bounds $z_{\min}$ and $z_{\max}$ are determined by admissibility conditions ensuring  that all factorial arguments are non-negative.

Such expressions arise in recoupling theory for $U_q(\mathfrak{sl}_2)$ and in the evaluation of state-sum invariants in topological quantum field theory \cite{kirillov1989reps, kauffman1994temperley,Turaev1992}. Even in this abstract form, two key structural features are apparent: (i) ratios of factorials produce rapidly varying magnitudes across the summation range, and (ii) the alternating sign $(-1)^z$ leads to oscillatory interference. These features are intrinsic and persist across essentially all non-trivial examples.

\subsection{The Quantum $6j$-Symbol}
\label{subsec:6j}

A canonical example of the series Eq.~\eqref{eq:generic_q_series} is the quantum $6j$-symbol (or $q$-deformed Racah--Wigner coefficient), which governs recoupling transformations in $U_q(\mathfrak{sl}_2)$ representation theory \cite{kauffman1994temperley, chari1995guide}. Beyond its algebraic significance, it serves as the fundamental building block of state sum models in quantum topology and quantum gravity \cite{Turaev1992,varshalovich1988quantum}.

From a computational standpoint, the quantum $6j$-symbol provides a minimal nontrivial setting in which the key computational challenges of evaluating oscillatory $q$-hypergeometric series become manifest. These include factorial growth, alternating sums, and severe cancellation between large intermediate contributions.

For admissible spins $\{j_1,\dots,j_6\}$ associated with a geometric tetrahedron, the quantum $6j$-symbol is defined by the Racah-type expression \cite{kirillov1989reps,kauffman1994temperley}
\begin{align}
\begin{Bmatrix} 
j_1 & j_2 & j_3 \\  j_4 & j_5 & j_6  \end{Bmatrix}_q &= 
\Delta_{j_1 j_2 j_3} \, \Delta_{j_1 j_5 j_6}\, \Delta_{j_2 j_4 j_6} \,\Delta_{j_3 j_4 j_5} \times \nn \\ 
&\sum_{z = z_{\min}}^{z_{\max}} \frac{(-1)^z [z+1]_q!}{\prod_{i=1}^4 [z - a_i]_q! \prod_{y=1}^3 [b_y - z]_q!}.
\label{eq:racah_6j}
\end{align}
The summation limits $z_{\min} = \max_i(a_i)$ and $z_{\max} = \min_y(b_y)$ are governed by the linear combinations 
\begin{align}
a_1 &= j_1+j_2+j_3, \q b_1 = j_1+j_2+j_4+j_5, \nn \\
a_2 &= j_1+j_5+j_6, \q b_2 = j_1+j_3+j_4+j_6, \nn \\
a_3 &= j_2+j_4+j_6, \q b_3 = j_2+j_3+j_5+j_6. \q \\
a_4 &= j_3+j_4+j_5, \nn 
\end{align}
The coefficients $\Delta_{abc}$ represent the geometric triangle coefficients (the radical prefactors), and are given by
\begin{equation}\label{eq:tri_coeff}
\Delta_{abc} = \sqrt{ \frac{[a+b-c]_q! [a-b+c]_q! [-a+b+c]_q!}{[a+b+c+1]_q!} },     
\end{equation}
where the triad $(a,b,c)$ satisfies the standard triangle inequalities ($|a - b| \leq c \leq a+ b$) and the integer sum condition ($a+b+c \in \Z$).

At roots of unity,
\begin{equation}\label{eq:qint_h}
q = \exp\left(\frac{i\pi}{h}\right), \q \text{where} \q h = k + 2.
\end{equation}
the theory becomes finite (a modular tensor category associated with $\mathrm{SU}(2)_k$, where $k\in \Z$), with quantum integers reducing to trigonometric functions and admissibility imposing a hard cutoff. Despite this finiteness, the computational difficulties persist and, in fact, become more pronounced due to enhanced oscillatory behavior.

\subsection{Cancellation and Numerical Conditioning}
\label{subsec:cancellation}

Let $T_z$ denote the magnitude of the $z$-th summand in Eq.~\eqref{eq:generic_q_series}, so that $S = \sum_z (-1)^z T_z$. A central difficulty arises from the large disparity between the scale of individual terms and the final sum.

A useful measure of this effect is the \emph{condition number} defined by
\begin{equation}\label{eq:condition_num}
\kappa = \frac{ \sum_z|T_z|}{|S|},
\end{equation}
which captures the scale separation between individual terms and the final sum. In floating-point arithmetic, the number of digits lost due to cancellation scales as $\log_{10}(\kappa)$ \cite{Higham2002}. 

In the semiclassical regime of the $6j$-symbol, one has $|S| \sim j^{-3/2}$ \cite{Ponzano1968,Roberts1999,Taylor2005}, while individual terms grow rapidly with $j$. As a result, $\kappa$ increases super-polynomially, leading to catastrophic cancellation. Crucially, this instability is not a numerical artifact but an intrinsic property of the series. Even perfectly implemented floating-point algorithms cannot avoid precision loss once $\kappa$ becomes large. To be precise, Table \ref{tab:cancellation} tracks the algebraic growth of the quantum $6j$-symbol Eq.~\eqref{eq:racah_6j} for equal spins $j_i = j$ at level $k = 4j$. At $j = 100$, the precision loss of $\Delta_{\text{loss}} = 13.6$ pushes 64-bit hardware to the extreme. By $j=200$, the maximum term reaches $10^{24}$, requiring a minimum of 28 digits of precision to distinguish the signal from numerical noise.

\begin{table}[htpb]
\centering
\caption{Condition number proxy and precision loss $\Delta_{\rm loss} = \log_{10}(\max_z|T_z|/|S|)$ for symmetric $6j$-symbols ($j_i = j$) at level $k = 4j$. IEEE~754 double precision provides $15.9$ significant digits. }
\begin{tabular}{@{}c@{\hspace{6pt}}c@{\hspace{7pt}}c@{\hspace{10pt}}c@{\hspace{12pt}}c@{}}
\toprule
\textbf{Spin $j$} & \textbf{Level $k$} & \textbf{$\max_z |T_z|$} & \textbf{$|S|$} & \textbf{$\Delta_{\text{loss}}$ } \\
\midrule
50  & 200  & $6.03 \times 10^{3\hphantom{00}}$  & $2.19 \times 10^{-3}$ & 6.4  \\
100 & 400  & $2.96 \times 10^{10\hphantom{0}}$ & $7.80 \times 10^{-4}$ & 13.6 \\
200 & 800  & $2.82 \times 10^{24\hphantom{0}}$ & $2.77 \times 10^{-4}$ & 28.0 \\
300 & 1200 & $4.74 \times 10^{38\hphantom{0}}$ & $1.51 \times 10^{-4}$ & 42.5 \\
400 & 1600 & $1.01 \times 10^{53\hphantom{0}}$ & $9.80 \times 10^{-5}$ & 57.0 \\
\bottomrule
\end{tabular}
\label{tab:cancellation}
\end{table}

Standard remedies, such as logarithmic rescaling via the Log-Sum-Exp (LSE) transformation \cite{Blanchard2012}, mitigate overflow but do not resolve cancellation. High-precision arithmetic (using \texttt{BigFloat}) delays failure but introduces significant computational overhead and does not address the underlying structural cause.

\subsection{Complexity of Exact Evaluation}
\label{subsec:cas_bottleneck}

Exact evaluation avoids numerical instability but introduces a different bottleneck: expression swell. At roots of unity, quantum integers can be represented exactly in cyclotomic fields $\Q(\zeta_{2h})$ \cite{habiro2004cyclotomic}. However, in standard computer algebra systems (CAS), this requires representing factorials as polynomials in $q$ and performing repeated rational operations \cite{Abramov1993}.

Each summation step combines rational functions as
\begin{equation}
\frac{P_A(q)}{P_B(q)} + \frac{P_C(q)}{P_D(q)} = \frac{P_A(q)P_D(q) + P_C(q)P_B(q)}{P_B(q)P_D(q)},
\end{equation}
leading to rapid growth in both degree and coefficient size. Simplification requires polynomial greatest common divisor (GCD) computations, which are themselves costly. As a result, intermediate expressions become significantly larger than the final result. This phenomenom, known as expression swell, is well documented in symbolic computation \cite{vonZurGathen2013, Geddes1992,burton2018algorithms}. It is inherent to the polynomial representation and leads to rapid growth in memory usage and runtime.

Symbolic summation algorithms can reduce or transform $q$-hypergeometric expressions \cite{zeilberger1990holonomic}, however, they do not alter the evaluation complexity of individual terms once expanded, nor do they mitigate cancellation effects in finite-precision arithmetic.

\subsection{Origin of Computational Complexity}
\label{subsec:representation_problem}

The preceding analysis reveals that both numerical instability and symbolic expression swell arise from evaluating the $q$-hypergeometric series in an expanded polynomial representation or rational form. In this representation, multiplicative structure is destroyed by expansion into dense polynomials in $q$. Moreover, common factors are introduced but only eliminated after expensive simplification and cancellation is deferred to the final stages of computation.

These effects are not artifacts of implementation, but consequences of representing inherently multiplicative objects in an additive algebraic form. Thus, the computational complexity of $q$-hypergeometric series is largely a consequence of representation choice rather than intrinsic combinatorial difficulty. An effective computational framework must therefore preserve multiplicative structure throughout, enable cancellation prior to expansion and avoid dense polynomial arithmetic altogether.

This observation motivates the representation introduced in the next section, in which quantum integers and factorials are encoded directly via their irreducible cyclotomic factors. In this formulation, multiplicative algebra is reduced to sparse integer arithmetic, eliminating both expression swell and representation-induced numerical instability.

\section{Cyclotomic Representation and the Deferred Evaluation}
\label{sec:cyclo_architecture}

In this section, we introduce a representation that directly resolves the structural and computational limitations identified in Section~\ref{sec:background}. The central idea is to separate the algebraic structure of $q$-hypergeometric series from their numerical realization by encoding quantum integers and factorials through their cyclotomic factorization. These factorizations are represented as sparse integer exponent vectors, ensuring that multiplicative structure is preserved throughout the computation.

This approach eliminates the need to construct dense polynomial expressions altogether. As a result, cancellations that were previously deferred to late stages of symbolic or numerical evaluation are instead performed exactly and immediately at the level of integer exponents.

This representation naturally leads to a \emph{deferred evaluation strategy} where the algebraic structure of the series is compiled once into an abstract combinatorial object, and evaluation in a specific numerical or algebraic field is postponed to a final projection step. In this way, the representation itself acts as a bridge between exact and approximate computation, while avoiding the intermediate complexity that plagues traditional approaches.

\subsection{Cyclotomic Factorization of Quantum Integers and Factorials}

We begin with the quantum integer
\begin{equation}
[n]_q = \frac{q^n - q^{-n}}{q - q^{-1}}.
\end{equation}
Factoring out the phase $q^{1-n}$ yields
\begin{equation}\label{eq:qint2}
[n]_q = q^{1-n} \frac{1 - (q^2)^n}{1 - q^2}  = q^{1-n} \sum_{k=0}^{n-1} q^{2k}.
\end{equation}
It is a classical result that the polynomial $x^n - 1$ admits a unique factorization into cyclotomic polynomials \cite{washington97},
\begin{equation}\label{eq:cyclo_factor}
x^n - 1 = \prod_{d \mid n} \Phi_d(x),
\end{equation}
where each $\Phi_d(x)$ is irreducible over $\Q$.

\begin{proposition}[Cyclotomic factorization]
\label{prop:cyclo_fact}
For any positive integer $n \geq 1$, the quantum integer and quantum factorial admit the factorization
\begin{align}\label{eq:qint_cyclo}
[n]_q &= q^{1-n} \prod_{\substack{d \mid n \\ d > 1}} \Phi_d(q^2), \\
[n]_q! &= q^{\frac{n(1-n)}{2}} \prod_{d=2}^n \Phi_d(q^2)^{\lfloor n/d \rfloor}
\label{eq:qfact_cyclo}
\end{align}
respectively, where $\Phi_d$ denotes the $d$-th cyclotomic polynomial. 
\end{proposition}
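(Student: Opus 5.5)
We start from the rewriting \eqref{eq:qint2}, $[n]_q = q^{1-n}\,(1-(q^2)^n)/(1-q^2)$, valid as an identity of Laurent polynomials in $q$. The plan is to substitute $x=q^2$ into the classical factorization \eqref{eq:cyclo_factor} for both numerator and denominator. Since $1-x^n = -(x^n-1)$ and $1-x = -(x-1)$, the signs cancel, and
\begin{equation*}
\frac{1-x^n}{1-x} = \frac{x^n-1}{x-1} = \frac{\prod_{d\mid n}\Phi_d(x)}{\Phi_1(x)} = \prod_{\substack{d\mid n\\ d>1}}\Phi_d(x),
\end{equation*}
using $\Phi_1(x)=x-1$ and the fact that $1\mid n$ always. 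Setting $x=q^2$ gives \eqref{eq:qint_cyclo}. The case $n=1$ is consistent: the product is empty and $q^{0}=1=[1]_q$.

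For the factorial, we will multiply \eqref{eq:qint_cyclo} over $m=1,\dots,n$ and treat the phase and cyclotomic parts separately.

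\emph{Phase.} The exponent is $\sum_{m=1}^n (1-m) = n - n(n+1)/2 = n(1-n)/2$, matching the claimed factor $q^{n(1-n)/2}$.

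\emph{Cyclotomic part.} We interchange the order of the product:
\begin{equation*}
\prod_{m=1}^n \prod_{\substack{d\mid m\\ d>1}} \Phi_d(q^2) = \prod_{d=2}^n \Phi_d(q^2)^{\#\{1\le m\le n:\, d\mid m\}}.
\end{equation*}
The count of multiples of $d$ in $\{1,\dots,n\}$ is $\lfloor n/d\rfloor$. Divisors $d>n$ never occur, which justifies the range $2\le d\le n$. This yields \eqref{eq:qfact_cyclo}. The empty case $n=0$, although not required by the statement, is also consistent.

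The only point needing care is to note that these are identities in $\Z[q,q^{-1}]$, or in $\Q(q)$ where division by $1-q^2$ is legitimate. Once established there, they hold under any specialization of $q$ for which the left-hand sides are defined. I expect no real obstacle beyond this bookkeeping; the main step is the reindexing of the double product and the divisor count $\lfloor n/d\rfloor$.
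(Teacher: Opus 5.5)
Your proof is correct and follows the paper's argument essentially step for step. You substitute $x=q^2$ into $x^n-1=\prod_{d\mid n}\Phi_d(x)$ and cancel $\Phi_1$ against the denominator. You then multiply over $m$, collecting the phase exponent $n(1-n)/2$ and counting multiples of $d$ to obtain $\lfloor n/d\rfloor$. Your remarks on the sign cancellation and on the identities holding in $\Z[q,q^{-1}]$ are sound bookkeeping that the paper leaves implicit.
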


\begin{proof}
Equation~\eqref{eq:qint_cyclo} follows by substituting $x = q^2$ into~\eqref{eq:cyclo_factor} and cancelling $\Phi_1(q^2) = q^2 - 1$
against the denominator of~\eqref{eq:qint2}.

For the factorial, apply~\eqref{eq:qint_cyclo} to each factor of $[n]_q! = \prod_{m=1}^n [m]_q$. The phase exponent collects into the sum $\sum_{m=1}^n (1-m) = n - \tfrac{n(n+1)}{2} = \tfrac{n(1-n)}{2}$. The exponent of $\Phi_d(q^2)$ counts the number of integers $m \in \{1,\ldots,n\}$ with $d \mid m$, which is exactly $\lfloor n/d \rfloor$.
\end{proof}

The key consequence of Proposition~\ref{prop:cyclo_fact} is that quantum integers and factorials admit a \emph{canonical multiplicative decomposition} into irreducible cyclotomic components. Unlike polynomial expansions, this factorization is sparse, structured, and directly aligned with the arithmetic of roots of unity. It is precisely this structure that is obscured in conventional representations and recovered here.

\subsection{Cyclotomic Exponent Basis}

The factorizations above suggest representing algebraic expressions through their exponents with respect to the basis $\{q, \Phi_d(q^2)\}$. We formalize this as follows: 

\begin{definition}[Cyclotomic exponent vector]
Let $D_{\max} \in \mathbb{N}$. A \textbf{cyclotomic exponent vector} is a map
\begin{equation}
\mathbf{e} : \{2,3,\dots,D_{\max}\} \to \Z.
\end{equation}
Equivalently, fixing the natural ordering of indices, we identify $\mathbf{e}$ with the vector $(e_2, e_3, \dots, e_{D_{\max}}) \in \mathbb{Z}^{D_{\max}-1}$,
with sparse support. The support size is defined by
\[ \|\mathbf{e}\|_0 = \#\{d : e_d \neq 0\}. \]
\end{definition}
The exponent vectors are stored in sparse form (e.g.\ hash maps or compressed index-value arrays), so that arithmetic operations scale linearly in the support size $\|\mathbf{e}\|_0$. This replaces dense polynomial algebra with sparse integer arithmetic, with direct consequences for both computational complexity and numerical stability.

\begin{definition}[Cyclotomic monomial]
Let $q$ be a formal variable. A \textbf{cyclotomic monomial} $\mathcal{M}$ is a tuple
\begin{equation}
\mathcal{M} = (\sigma, P, \mathbf{e}),
\end{equation}
where $\sigma \in \{-1,1\}$, $P \in \Z$, and $\mathbf{e}$ is a cyclotomic exponent vector.\footnote{For generalized or weighted cyclotomic monomials, $\sigma$ may be extended to $\Q$ or $\R$, while retaining the same exponent structure.} It represents the formal expression
\begin{equation}
\label{eq:cyclo_monomial}
\mathcal{M}(q) = \sigma \, q^P \prod_{d \ge 2} \Phi_d(q^2)^{e_d}.
\end{equation}
\end{definition}

The cyclotomic monomials thus, form a free abelian multiplicative group on formal generators $q$ and $\{\Phi_d\}$, with integer exponent vectors. This representation converts multiplicative algebra into additive integer structure. In particular, multiplication and division correspond to component-wise integer operations
\begin{align}\label{eq:integer_ops}
\mathcal{M}_A \times \mathcal{M}_B &\mapsto (\sigma_A \sigma_B, \; P_A + P_B, \; \mathbf{e}_A + \mathbf{e}_B), \nonumber \\
\frac{\mathcal{M}_A}{\mathcal{M}_B} &\mapsto (\sigma_A \sigma_B, \; P_A - P_B, \; \mathbf{e}_A - \mathbf{e}_B).
\end{align}
Crucially, cancellations that would require polynomial GCD computations in the standard representation are realized here as exact integer subtractions. This directly addresses the representation-level inefficiencies identified in Section~\ref{subsec:representation_problem}.

Since the exponent vector $\mathbf{e}$ inherently allows negative integers (arising  from denominator factors in the ratio ${\cal R}_z$) the cyclotomic monomials do not reside in a standard polynomial ring. Rather, their natural algebraic setting is a localization. By formally adjoining the inverses of the deformation parameter and the cyclotomic polynomials to the polynomial ring, we obtain the localized ring $\mathbb{Z}[q^{\pm 1},\,\Phi_d(q^2)^{\pm 1}]_{d \geq 2}$. This natively accommodates the exact fractional arithmetic of the DCR without requiring expansion into dense rational functions.

\subsection{Hypergeometric Update Structure}

To avoid recomputation of large factorial expressions, the $q$-hypergeometric series Eq.~\eqref{eq:generic_q_series} are evaluated using their defining recurrence relations between consecutive summands. Let $\mathcal{M}_z$ denote the cyclotomic representation of the $z$-th summand. Then
\begin{equation}
\mathcal{M}_{z+1} = \mathcal{M}_z \cdot \mathcal{R}_z,
\end{equation}
where $\mathcal{R}_z$ is a rational function involving only local quantum integers.
For the quantum $6j$-symbol, one obtains the explicit ratio
\begin{equation}\label{eq:ratio_Rz}
\mathcal{R}_z = - \frac{[z+2]_q \prod_{y=1}^3 [b_y- z]_q}
{\prod_{i=1}^4 [z+1 - a_i]_q}.
\end{equation}
This recurrence structure is essential: it ensures that the global factorial growth observed in Section~\ref{subsec:cancellation} is never explicitly realized. Instead, the computation proceeds through local multiplicative updates.

In the cyclotomic representation, each factor in $\mathcal{R}_z$ can be decomposed independently, and each update reduces to sparse integer addition
\begin{equation}
\mathbf{e}_{z+1} = \mathbf{e}_z + \mathbf{e}_{\mathcal{R}_z}, \q \q P_{z+1} = P_z + P_{\mathcal{R}_z}
\end{equation}
with intial data at $z=z_{\min}$. The cost of propagating the full summation is therefore governed by the number of nonzero entries in these exponent vectors and the length of the summation range.

\subsection{Deferred Cyclotomic Representation}

The construction above naturally lead to a separation between algebraic compilation and numerical evaluation. For a given finite $q$-hypergeometric series, we encode the algebraic structure of each summand in a cyclotomic exponent basis and organize the result into a compact representation: 
\begin{definition}[Deferred cyclotomic representation (DCR)]
Let $S(q)$ be a finite $q$-hypergeometric series with summation range $z \in [z_{\min}, z_{\max}]$. A \textbf{deferred cyclotomic representation} of $S$ is the tuple
\begin{equation}
\mathcal{S}_{\rm DCR} = \left( \mathcal{M}_{\text{base}}, \{ {\mathcal{R}_z}\}_{z=z_{\min}}^{z_{\max}-1}, \mathcal{M}_{\text{root}}, \mathcal{M}_{\text{rad}}
\right),
\end{equation}
where $\mathcal{M}_{\mathrm{base}}$ is the cyclotomic monomial of the initial summand at $z_{\min}$, $\{\mathcal{R}_z\}$ is the sequence of multiplicative update ratios computed via Eq.~\eqref{eq:ratio_Rz}, and $(\mathcal{M}_{\mathrm{root}},\, \mathcal{M}_{\mathrm{rad}})$ are the square and square-free parts of the global geometric prefactor (as in Eq.~\eqref{eq:sqrt_decomp}). This object encodes the full algebraic structure of the series independently of any numerical value of $q$.
\end{definition}

The DCR transforms the evaluation problem of $q$-hypergeometric series $S(q)$ into two distinct stages (See Figure\ref{fig:dcr_architecture}):
\begin{itemize}
\item \emph{Compilation stage:} construction of a combinatorial object $\mathcal{S}_{\text{DCR}}$ over $\Z$ encoding all multiplicative structure,
\item \emph{Projection stage:} evaluation in a chosen target field.
\end{itemize}
This separation directly resolves the dual bottlenecks identified in Section~\ref{sec:background}, i.e., expression swell is avoided because no dense expressions are formed, and numerical instability is mitigated because cancellations are performed exactly prior to evaluation.

We use the term deferred cyclotomic representation (DCR) in two closely related senses, as the underlying combinatorial exponent data, and as the two-stage framework in which this data is constructed and subsequently evaluated via a projection map. The intended meaning will be clear from context.

\medskip
\paragraph*{Exact square-root decomposition.}
A distinctive feature of quantum recoupling coefficients like Eq.~\eqref{eq:racah_6j} is the presence of square roots of quantum factorial ratios in global geometric prefactors, such as the triangle coefficients $\Delta_{abc}$ of Eq.~\eqref{eq:tri_coeff}. In the polynomial representation, such square roots require algebraic field extensions. One cannot remain in $\mathbb{Q}(q)$ and must adjoin square roots of polynomials, inflating the arithmetic cost. 

In the cyclotomic monomial representation, square roots are handled exactly at the exponent level, avoiding algebraic field extensions during compilation. This is achieved by separating the exponents of the geometric prefactors denoted by $\mathcal{M}_{\rm geom}$ into even and odd parts
\begin{equation}\label{eq:sqrt_decomp}
\mathcal{M}_{\rm geom} = \mathcal{M}_{\rm root} \cdot \sqrt{\mathcal{M}_{\rm rad}}.
\end{equation}

Here, $\mathcal{M}_{\rm root}$ collects all cyclotomic factors with even exponents (a perfect square) and $\mathcal{M}_{\rm rad}$ collects the square-free part ($e_d \in \{-1,0,1\}$ for all $d$). Both components have integer exponents and are evaluated exactly by integer arithmetic. The entire square root is resolved at the exponent level, without field extension and without any polynomial computation. This is an exact, algebraic capability that has no direct analogue in the polynomial representation.

\medskip
\paragraph*{The DCR as algebraic preconditioner.}
Because all algebraic simplifications including phase factors, factorial cancellation, and square-root extraction, are performed at the integer level during compilation, before any value of $q$ is assigned, the DCR delivers the summand to the evaluation stage in its maximally reduced form. 

This has a precise computational consequence: the dynamic range of intermediate quantities is drastically reduced relative to the polynomial representation. In particular, the large intermediate cancellations to resolve the condition number $\kappa$ in Section~\ref{subsec:cancellation} are no longer realized numerically, but are instead resolved exactly at the algebraic level.
In this sense, the DCR acts as an \emph{algebraic preconditioner}. It reorganizes the computation so that the representation itself performs the cancellations that would otherwise occur approximately under finite precision arithmetic. This explains the qualitative improvement in numerical stability and the extension of reliable double-precision computation demonstrated in Section~\ref{sec:numerical_results}.

\begin{figure*}[htpb]
\centering
\begin{tikzpicture}[
node distance=2.5cm and 3cm,
box/.style={draw, rectangle, rounded corners, minimum height=1.2cm, align=center, thick, fill=white, inner sep=1ex},
arrow/.style={->, thick, >=stealth}
]

\node[box, fill=blue!10] (input) {$q$-hypergeometric series \\ $S(q)$};

\node[box, fill=purple!15, right=of input] (dcr) {Deferred Cyclotomic \\ Representation \\  $\mathcal{S}_{\text{DCR}}$ };

\node[box, fill=orange!10, right=of dcr, yshift=1.8cm] (exact) {Exact Field \\ $\mathbb{K} = \Q(\zeta_{2h})$};
\node[box, fill=orange!10, right=of dcr] (numeric) {Floating-point \\ $\mathbb{K} \in \{\R, \C\}$};
\node[box, fill=orange!10, right=of dcr, yshift=-1.8cm] (classical) {Classical limit \\ ($q \to 1$)};

\draw[arrow] (input) -- node[above, align=center, font=\small, text=black!80] { Cyclotomic} node[below, align=center, font=\small, text=black!80] {factorization} (dcr);

\draw[arrow, dashed, black!70] (dcr.east) to[out=0, in=180] node[above, sloped, font=\small] {Projection $\Pi_q$} (exact.west);
\draw[arrow, dashed, black!70] (dcr.east) -- node[above, font=\small] {} (numeric.west);
\draw[arrow, dashed, black!70] (dcr.east) to[out=0, in=180] node[below, sloped, font=\small] {} (classical.west);

\end{tikzpicture}
\caption{The deferred cyclotomic architecture. A $q$-hypergeometric series is compiled once into the DCR as a parameter-independent combinatorial object over $\Z$ encoding its full algebraic content without assuming any value of $q$. Evaluation in a target field $\mathbb{K}$ is a projection $\Pi_q$ applied to this fixed object. Exact arithmetic, floating-point evaluation, and the classical limit $q \to 1$ are all distinct instances of this projection map.}
\label{fig:dcr_architecture}
\end{figure*}
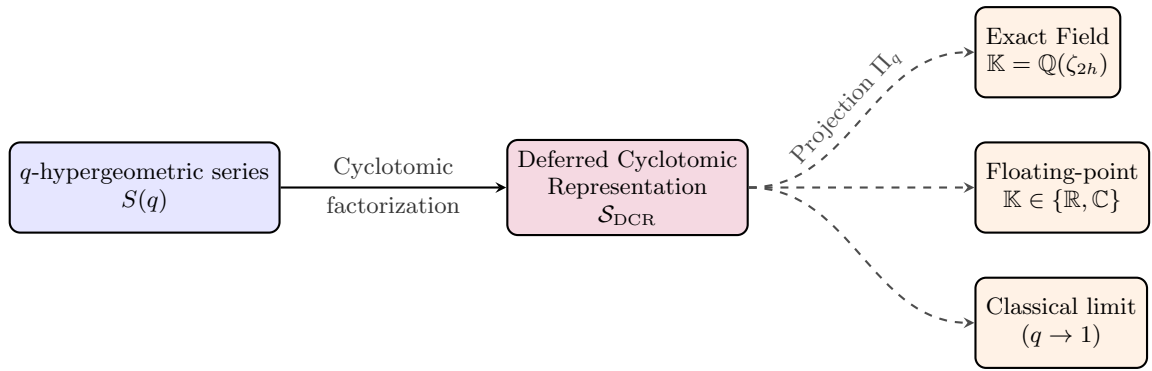

We summarize the construction of the deferred representation for a generic $q$-hypergeometric series in Algorithm \ref{alg:compilation}.

\begin{algorithm}[H]
\caption{Deferred cyclotomic compilation}
\label{alg:compilation}
\begin{algorithmic}[1]
\Require A $q$-hypergeometric series $S(q)$ with external parameters (e.g. spins $j$)
\Ensure Deferred cyclotomic representation  $\mathcal{S}_{\text{DCR}}$
\State $z_{\min}, z_{\max} \gets$ determine summation bounds from admissibility conditions
\State $\mathcal{M}_{\text{geom}} \gets$ construct geometric prefactors (e.g.\ $\Delta_{abc}$) via Proposition~\ref{prop:cyclo_fact}
    \Comment{integer arithmetic only}   
\State $(\mathcal{M}_{\text{root}}, \mathcal{M}_{\text{rad}}) \gets$ extract perfect squares from $\mathcal{M}_{\text{geom}}$
\State construct $\mathcal{M}_{\mathrm{base}}$ at $z = z_{\min}$
\State $\{\mathcal{R}_z\} \gets []$ \Comment Initialize ratio sequence  
\For{$z = z_{\min}$ to $z_{\max}-1$}
    \State $\mathcal{R}_z \gets$ compute cyclotomic factorization of the ratio via Eq.~\eqref{eq:ratio_Rz}
    \State Append $\mathcal{R}_z$ to $\{\mathcal{R}_z\}$
\EndFor
\State \Return $(\mathcal{M}_{\mathrm{base}}, \{\mathcal{R}_z\}, \mathcal{M}_{\text{root}}, \mathcal{M}_{\text{rad}})$
\end{algorithmic}
\end{algorithm}

Once constructed, the DCR can be specialized to different target domains through a projection step. The same representation is reused in all cases, while only the evaluation map depends on the chosen numerical or algebraic setting. This separation allows a single compiled object to support consistent evaluation across exact and approximate regimes, and underlies the stability and efficiency gains analyzed below. In this sense, the DCR acts as a representation-level preconditioner, reorganizing the computation prior to any numerical or symbolic realization.

\section{Universal Field Projections}
\label{sec:universal_projections}

The deferred cyclotomic representation (DCR) provides a unified representation of $q$-hypergeometric series in terms of cyclotomic exponent data. Evaluation is realized as a projection of the representation into a chosen target field $\mathbb{K}$ by assigning concrete values to the cyclotomic basis elements.

This formulation elevates evaluation from a computational procedure to a structural operation. Rather than constructing distinct algebraic expressions for each specialization of $q$, the DCR defines a single universal object together with a family of projection maps. In this sense, $q$-dependence is no longer an intrinsic feature of the representation itself, but an external parameter introduced only at the level of evaluation. This separation is the key conceptual mechanism underlying both the computational efficiency and the structural consequences developed later in Section \ref{sec:structure}.

More precisely, each cyclotomic monomial $\mathcal{M}$ Eq.~\eqref{eq:cyclo_monomial} corresponding to a summand in the series, is encoded by a sign $\sigma$, an integer power $P$ of the deformation parameter, and a sparse collection of exponents $\{e_d\}$. A projection into $\mathbb{K}$ is specified by the following definition.
\begin{definition}[Field projection]
Let $\mathbb{K}$ be a target field. A \emph{field projection} evaluated at a non-zero parameter $q \in \mathbb{K}^\times$ is a map
\[ \Pi_{q} : \mathcal{M} \;\to\; \mathbb{K} \]
defined on cyclotomic monomials by
\begin{equation} \label{eq:project_monomial}
\Pi_{q}(\mathcal{M}) = \sigma\, q^P \prod_{d \ge 2} \Phi_d(q^2)^{e_d},
\end{equation}
where $\sigma \in \{\pm 1\}$, $P \in \Z$, and $\{e_d\}$ are the cyclotomic exponents. The images of $q$ and $\Phi_d(q^2)$ are dictated by the algebraic structure of the target field $\mathbb{K}$.
\end{definition}

The map $\Pi_{q}$ is multiplicative by construction and extends linearly to sums of cyclotomic monomials, ensuring compatibility with the hypergeometric summation structure. Thus, the evaluation of a full $q$-hypergeometric series proceeds by iterating the update relations encoded in the DCR and applying $\Pi_{q}$ to each intermediate monomial. Since the representation itself is independent of $\mathbb{K}$, the same compiled object can be reused across all projection regimes. This perspective recasts evaluation as the application of a family of parameter-dependent maps to a fixed combinatorial object.

\subsection{Projection Regimes}

We distinguish several natural projection regimes relevant to applications in quantum topology and numerical analysis.

\paragraph*{1. Root-of-unity evaluation.}
Let $q = \exp(i\pi/h)$, so that $q^2 = \exp(2\pi i / h)$ is a primitive $h$-th root of unity. Physically, evaluating quantum recoupling coefficients at a root of unity corresponds to introducing a non-zero cosmological constant $\Lambda \propto 1/h$ into the gravitational path integral, yielding a semiclassical limit governed by curved geometries \cite{pranzetti2014turaev,Haggard:2014xoa,Haggard:2015yda,bonzom2014deformed}. Such $q$-deformations are also deeply tied to the invariance of topological state sums under coarse-graining \cite{Livine:2016vhl,Dittrich:2011zh,Asante:2022dnj,Livine:2013gna}. 

In this regime, the cyclotomic polynomials satisfy
\begin{equation}
\Phi_d(q^2) = 0 \quad \text{if and only if } d = h.
\end{equation}
Thus, the vanishing locus is supported on a single cyclotomic index. Any cyclotomic monomial with an exponent $e_h > 0$ evaluates to zero. This leads to immediate simplifications, since contributions can be identified and eliminated early, allowing early termination in the evaluation of individual summands.  Conversely, monomials with $e_h < 0$ would manifest as poles under this projection. However, such poles do not appear in physically admissible amplitudes. The geometric triangle inequalities and quantum admissibility constraints inherently guarantee that all cyclotomic denominators in the update sequence $\{\mathcal{R}_z\}$ are fully canceled by corresponding numerators at the singular index $h$. Therefore, the projection map $\Pi_{\mathbb{K}}$ remains well-defined on the admissible subring generated by the DCR.

Thus, despite the presence of arbitrarily large indices $d$ in the deferred cyclotomic representation, only the single index $d = h$ governs singular behaviour upon specialization. This sharply reduces the number of nonzero terms and is particularly effective in the computation of Turaev--Viro type invariants and exact finite-group spin foams.

\paragraph*{2. Exact algebraic projection.}
For symbolic evaluation, one may take $\mathbb{K} = \Q(\zeta_{2h})$, the cyclotomic number field generated by a primitive root $\zeta_{2h}$. The projection
\begin{equation}
\Pi_{\zeta}(\mathcal{M}) = \sigma\, \zeta^{P} \prod_{d \ge 2} \Phi_d(\zeta^2)^{e_d}
\end{equation}
is then computed using exact arithmetic. Because cancellations between numerator and denominator are resolved at the level of exponent vectors, the projection avoids intermediate polynomial division and reduces reliance on expensive greatest common divisor computations. As a result, it is well suited for implementations in computer algebra systems such as \texttt{Nemo.jl} \cite{fieker2017nemo}.

\paragraph*{3. Complex analytic evaluation.}
For generic $q \in \mathbb{C}^\times$, redundant recomputation of cyclotomic factors is avoided by precomputing the values $\{\Phi_d(q^2)\}$ up to the maximal degree required by the deferred representation. Using the identity
$ q^{2n} - 1 = \prod_{d \mid n} \Phi_d(q^2),$ the cyclotomic basis is generated via recursive Möbius inversion,
\begin{equation}
\Phi_n(q^2) = \frac{q^{2n} - 1}{\prod_{d \mid n,\, d < n} \Phi_d(q^2)}.
\end{equation}
This produces a dense table in $\mathcal{O}(D \log D)$ time, eliminating repeated factorization or symbolic expansion.

Once precomputed, any deferred cyclotomic monomial $\mathcal{M}$ is evaluated through sparse multiplicative assembly according to Eq.~\eqref{eq:project_monomial}. The computation reduces to fast exponentiation and a small number of complex multiplications, yielding an efficient and numerically stable evaluation pipeline, particularly near the unit circle where direct methods are prone to loss of precision.

\paragraph*{4. Classical limit.}
In the limit $q \to 1$, the cyclotomic factors satisfy \cite{washington97}
\begin{equation}\label{eq:classical_limit}
\Phi_d(1) = \begin{cases}
p & \text{if } d = p^m \text{ for some prime } p, \\
1 & \text{otherwise.}
\end{cases}
\end{equation}
Under this specialization (the projection $\Pi_1$), the cyclotomic exponent representation reduces to a prime factorization of classical factorial expressions. This establishes consistency between the deferred cyclotomic framework and standard arithmetic formulations of Racah--Wigner coefficients \cite{Johansson2016}.

\begin{algorithm}[H]
\caption{DCR Projection}
\label{alg:projection}
\begin{algorithmic}[1]
\Require $\mathcal{S}_{\mathrm{DCR}}$, evaluation parameter $q$
\Ensure  Evaluated amplitude $S(q) \in \mathbb{K}$
\State Precompute $\Pi_{q}(\Phi_d(q^2))$ for all required $d$
       \Comment{parameter-dependent, one-time cost}
\State $S_{\rm sum} \gets \Pi_{q}(\mathcal{M}_{\rm base})$
       \Comment{$z_{\min}$ term}
\State $R \gets 1$
\For{each ratio $\mathcal{R}_z$ in $\{\mathcal{R}_z\}$}
  \State $v \gets \Pi_{q}(\mathcal{R}_z)$
  \If{$v = 0$}
    \textbf{break}
    \Comment{early termination at vanishing factor}
  \EndIf
  \State $R \gets R \cdot v$
  \State $S_{\rm sum} \gets S_{\rm sum} + \Pi_{q}(\mathcal{M}_{\rm base}) \cdot R$
\EndFor
\State \Return
  $\sqrt{\Pi_{q}(\mathcal{M}_{\rm rad})} \cdot \Pi_{q}(\mathcal{M}_{\rm root}) \cdot S_{\rm sum}$
\end{algorithmic}
\end{algorithm}

These projection regimes illustrate that numerical evaluation, exact arithmetic, and classical limits are not distinct computational problems, but instances of a single universal projection mechanism acting on a fixed combinatorial object.

An important advantage of the DCR is the square-free decomposition $\mathcal{M}_{\mathrm{geom}} = \mathcal{M}_{\mathrm{root}} \cdot \sqrt{\mathcal{M}_{\mathrm{rad}}}$ performed during compilation. While $\mathcal{M}_{\mathrm{root}}$ is evaluated using standard integer arithmetic, the residual factor $\sqrt{\mathcal{M}_{\mathrm{rad}}}$ requires a branch choice during projection. We evaluate this by assigning to each remaining square-free factor $\Phi_d(q^2)^{e_d}$ its projection $\Pi_{\mathbb{T}}(\Phi_d(q^2))^{e_d}$, and subsequently taking the principal square root in the target field $\mathbb{T}$. For exact algebraic projections where $\mathbb{T} = \mathbb{Q}(\zeta_{2h})$, we select the unique positive square root of the rational component paired with the standard branch for cyclotomic units, consistent with classical Racah--Wigner recoupling conventions \cite{varshalovich1988quantum}. 

\section{Structural and Algebraic Consequences of the DCR}
\label{sec:structure}

The deferred cyclotomic representation (DCR) has been introduced as a computational framework for stabilizing and scaling the evaluation of $q$-hypergeometric series. However, by isolating algebraic structure from evaluation, it also provides a new perspective on the internal organization of quantum amplitudes. In particular, deformation, admissibility, and asymptotic behavior can be reformulated as properties of a fixed combinatorial object prior to evaluation.

We emphasize here that several of the structural interpretations developed in this section are preliminary. While the algebraic statements regarding the DCR are exact, their implications for quantum topology and discrete quantum geometry require further investigation. The goal here is therefore to identify precise reformulations and to delineate directions where the representation may offer new conceptual leverage.

\subsection{Deformation as Evaluation of a Fixed Combinatorial Object}

A defining feature of the DCR is that all dependence on the deformation parameter $q$ is mediated exclusively through the projection map. The underlying exponent data is independent of $q$.

\begin{proposition}[Deformation as evaluation]
\label{prop:deformation_as_evaluation}
Let $S(q)$ be a finite $q$-hypergeometric series admitting a deferred cyclotomic representation $\mathcal{S}_{\mathrm{DCR}}$. Then for any admissible value of $q$,
\begin{equation}
S(q) = \Pi_q(\mathcal{S}_{\mathrm{DCR}}),
\end{equation}
where $\Pi_q$ denotes the evaluation map defined in Eq.~\eqref{eq:project_monomial}.
\end{proposition}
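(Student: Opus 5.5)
The plan is to make precise what $\Pi_q(\mathcal{S}_{\mathrm{DCR}})$ means and then check that it agrees with $S(q)$ term by term. By definition, $\Pi_q$ applied to the tuple $(\mathcal{M}_{\rm base},\{\mathcal{R}_z\},\mathcal{M}_{\rm root},\mathcal{M}_{\rm rad})$ is the output of Algorithm~\ref{alg:projection}:
\begin{equation*}
\Pi_q(\mathcal{S}_{\mathrm{DCR}}) = \sqrt{\Pi_q(\mathcal{M}_{\rm rad})}\,\Pi_q(\mathcal{M}_{\rm root}) \sum_{z=z_{\min}}^{z_{\max}} \Pi_q\Bigl(\mathcal{M}_{\rm base}\prod_{w=z_{\min}}^{z-1}\mathcal{R}_w\Bigr).
\end{equation*}
Call the two parts of this expression the prefactor and the sum. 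The key structural input is that $\Pi_q$ is a group homomorphism. It is a homomorphism from the free abelian group of cyclotomic monomials, with the operations of Eq.~\eqref{eq:integer_ops}, into $\mathbb{K}^\times$. This holds wherever none of the $\Phi_d(q^2)$ with $e_d\neq 0$ vanishes. It is immediate from Eq.~\eqref{eq:project_monomial}, because exponents add under multiplication.

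First I establish that each summand is correct as a formal identity. By Proposition~\ref{prop:cyclo_fact}, each quantum factorial $[n]_q!$ equals, in $\mathbb{Q}(q)$, the rational function represented by a cyclotomic monomial. Products and ratios of such factorials are therefore represented by the sums and differences of their exponent data. Hence $\mathcal{M}_{\rm base}$ represents the $z_{\min}$ summand of Eq.~\eqref{eq:generic_q_series}, with sign and phase $q^{f(z)}$ absorbed into $\sigma$ and $P$. Each $\mathcal{R}_z$ represents the ratio of consecutive summands, for instance Eq.~\eqref{eq:ratio_Rz} for the $6j$-symbol. By induction on $z$, $\mathcal{M}_{\rm base}\prod_{w<z}\mathcal{R}_w = \mathcal{M}_z$ as formal monomials, and $\mathcal{M}_z$ represents the $z$-th summand. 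The same argument applied to $\mathcal{M}_{\rm geom}$ and Eq.~\eqref{eq:sqrt_decomp} shows that $\mathcal{M}_{\rm root}^2\mathcal{M}_{\rm rad}$ represents the square of the prefactor, for example $\prod\Delta^2$.

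Next I pass from formal identities to values. For generic $q\in\mathbb{C}^\times$, meaning no $\Phi_d(q^2)$ vanishes for $d$ in the support, evaluation $\mathbb{Q}(q)\to\mathbb{C}$ at $q$ is defined on every rational function involved. Evaluation is multiplicative, so it agrees with $\Pi_q$ on monomials. Summing over $z$ then gives the equality for the sum. For the prefactor, squaring gives equality up to sign. The branch convention fixed after Algorithm~\ref{alg:projection}, the principal root matching the Racah--Wigner convention, then selects the same root used in the definition of $S(q)$.

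The main obstacle is admissible $q$ at a root of unity, $q^2=e^{2\pi i/h}$. There $\Phi_h(q^2)=0$, so $\Pi_q$ is not a homomorphism on monomials with $e_h<0$, and the intermediate products $\prod_{w<z}\Pi_q(\mathcal{R}_w)$ may be undefined or may vanish. My plan is to work with the total exponent $(\mathcal{M}_z)_h$ rather than the ratios. I would use admissibility, namely that all factorial arguments are below $h$ in the relevant combinations, to show $(\mathcal{M}_z)_h\ge 0$ for every $z$ and likewise for the prefactor. In the $6j$ case this follows because $\lfloor n/h\rfloor=0$ for the denominator arguments when $b_y,\,z+1\le h-1$ or thereabouts. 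With that in place, $\Pi_q(\mathcal{M}_z)$ is well defined. It equals the limit of $S$'s $z$-th summand as $q'\to q$, by continuity of $\Phi_d$ and the generic case. Once a factor $\Phi_h$ appears with positive exponent, every later summand vanishes, and this justifies the early \texttt{break}. If the nonnegativity of $e_h$ fails for some non-admissible data, I would restrict the statement to the admissible subring, as the paper indicates.
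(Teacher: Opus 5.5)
Your proposal is correct and follows the paper's own argument: each summand is the cyclotomic monomial $\mathcal{M}_{\rm base}\prod_{w<z}\mathcal{R}_w$, whose projection reproduces the term, and multiplicativity of $\Pi_q$ carries this through the recursion. You also handle the root-of-unity poles (via the total exponent $(\mathcal{M}_z)_h\ge 0$) and the square-root prefactor, which the paper's three-line proof passes over.

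Two small points to tighten. First, admissibility does not give $b_y\le h-1$; what it does give is that every $b_y-a_i$ equals a triangle sum minus twice an edge, hence is $\le k<h$, which bounds all denominator arguments $z-a_i$ and $b_y-z$. Second, the sign of the prefactor is fixed by convention rather than deduced: for $|q|=1$ the value $\Pi_q(\mathcal{M}_{\rm root})$ is generally non-real, so taking the principal root of $\Pi_q(\mathcal{M}_{\rm rad})$ alone need not yield the positive $\prod\Delta$.
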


\begin{proof}
By construction, each summand of $S(q)$ is represented as a cyclotomic monomial $\mathcal{M}_z$ whose evaluation under $\Pi_q$ reproduces the original expression. The full sum is obtained by iterating the update relations encoded in $\mathcal{S}_{\mathrm{DCR}}$ and applying $\Pi_q$ termwise. Since $\Pi_q$ is multiplicative, the result follows.
\end{proof}

This formulation implies that quantum amplitudes at different values of $q$ are not distinct algebraic objects, but rather evaluations of a single underlying combinatorial structure. In particular, all topological levels $k$ and the classical limit $q \to 1$ arise from the same data $\mathcal{S}_{\mathrm{DCR}}$ via different projection maps.

In quantum topology and discrete quantum gravity models, the deformation parameter $q$ is related to physical parameters such as the cosmological constant $\Lambda$. For example, in Turaev--Viro type models, choosing $q$ to be a root of unity corresponds to a theory with positive cosmological constant ($\Lambda > 0$), while other regimes are associated with different geometric behaviors.

From the perspective of the DCR, these different physical regimes arise from different evaluations of the same combinatorial data. The representation itself does not encode the value or sign of $\Lambda$. Instead, this information enters only through the projection map. It is important to emphasize that this does not imply that different physical theories are equivalent. Rather, the DCR shows that they share a common algebraic backbone, while their physical interpretation is determined by how this structure is evaluated.

\subsection{Admissibility as Cyclotomic Vanishing}

At roots of unity, admissibility conditions restrict the allowed spins and summation ranges. In the DCR, these constraints acquire an intrinsic algebraic characterization.

\begin{proposition}[Admissibility via cyclotomic support]
\label{prop:admissibility}
Let $q = e^{i\pi/h}$, so that $q^2$ is a primitive $h$-th root of unity. For a cyclotomic monomial $\mathcal{M}$ with exponent vector $\mathbf{e}$, the evaluation satisfies
\begin{equation}
\Pi_q(\mathcal{M}) = 0 \q \Longleftrightarrow \q e_h > 0.
\end{equation}
\end{proposition}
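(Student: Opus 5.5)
The plan is to reduce the statement to the elementary fact about where cyclotomic polynomials vanish, and then use that a field has no zero divisors. Write $\zeta = q^2 = e^{2\pi i/h}$, which is a primitive $h$-th root of unity. The key lemma is
\[
\Phi_d(\zeta) = 0 \iff d = h .
\]
This holds because $\Phi_d(x) = \prod_{\gcd(m,d)=1}(x - e^{2\pi i m/d})$, so the roots of $\Phi_d$ are exactly the primitive $d$-th roots of unity. Since $\zeta$ has multiplicative order exactly $h$, it is a primitive $d$-th root of unity only when $d = h$. This is the fact already quoted in the root-of-unity projection regime; I would include the one-line justification above for completeness.

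With the lemma in hand, I would split the factors in Eq.~\eqref{eq:project_monomial} into three groups. The prefactor $\sigma q^P$ is nonzero, because $\sigma \in \{\pm1\}$ and $q$ lies on the unit circle, so $q^P \neq 0$ for every $P \in \Z$. The exponent vector has finite support in $\{2,\dots,D_{\max}\}$, so the product over $d$ is finite. For every $d \neq h$ in the support, $\Phi_d(\zeta)$ is a nonzero complex number, so $\Phi_d(\zeta)^{e_d}$ is well defined and nonzero whatever the sign of $e_d$. Hence
\[
\Pi_q(\mathcal{M}) = \Phi_h(\zeta)^{e_h}\cdot C, \qquad C \neq 0 .
\]
Both directions then follow from $\C$ being a field. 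If $e_h > 0$, then $\Phi_h(\zeta)^{e_h} = 0$ and the product vanishes. If $e_h = 0$, then $\Pi_q(\mathcal{M}) = C \neq 0$, so vanishing forces $e_h \neq 0$.

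The only delicate point is the case $e_h < 0$, where $\Pi_q(\mathcal{M})$ contains $\Phi_h(\zeta)^{e_h} = 0^{-|e_h|}$ and is not an element of $\mathbb{K}$ at all. I would handle this by stating that the projection is defined only on monomials with $e_h \ge 0$. This matches the paper's earlier remark that admissible amplitudes have all denominators at the index $h$ cancelled, so that $\Pi_q$ lives on the admissible subring. On that domain the equivalence above is exactly the claim. Alternatively, if one regards the $e_h<0$ case as a pole, then $\Pi_q(\mathcal{M})$ is certainly not the value $0$, and the ``$\Rightarrow$'' direction remains valid. Making this domain convention explicit is the main thing to get right; the algebra itself is immediate.
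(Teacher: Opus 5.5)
Your proposal is correct and takes the same route as the paper, whose proof consists only of the observation that $\Phi_d(q^2)=0$ if and only if $d=h$, so the monomial vanishes exactly when the exponent of $\Phi_h(q^2)$ is positive. Your version fills in details the paper leaves implicit: why the roots of $\Phi_d$ are precisely the primitive $d$-th roots of unity, why $\sigma q^P$ and the other factors are nonzero, and an explicit domain convention for the pole case $e_h<0$ (which the paper handles only through its earlier remark about the admissible subring).
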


\begin{proof}
At $q^2 = e^{2\pi i/h}$, one has $\Phi_d(q^2) = 0$ if and only if $d = h$. Thus $\mathcal{M}(q)$ vanishes precisely when the exponent of $\Phi_h(q^2)$ is positive.
\end{proof}

This identifies admissibility as a vanishing condition internal to the representation, rather than an externally imposed constraint. Terms violating admissibility are annihilated under projection, and in practical implementations this can lead to early termination of summations. Conceptually, this reframes admissibility as a property of cyclotomic support. The role of the root of unity is to probe this support via the projection map, rather than to impose constraints at the level of summation indices.

\subsection{Algebraic Structure and Exponent Geometry}

Within the DCR, each term in the hypergeometric summation is encoded by an exponent vector $\mathbf{e}_z$ together with a phase variable $P_z$. The summation therefore defines a discrete trajectory
\begin{equation}
z \;\longmapsto\; (\mathbf{e}_z, P_z) \in \Z^{D_{\max}} \times \Z.
\end{equation}
whose evolution is governed by the rational update operator $\mathcal{R}_z$.

This perspective reformulates multiplicative algebra of quantum factorials as additive arithmetic on integer lattices. In particular,
\begin{itemize}
\item cancellations correspond to linear relations in $\Z^{D_{\max}}$,
\item factorial growth of expressions is encoded in the norm $\|\mathbf{e}_z\|$,
\item phase information is tracked separately through $P_z$.
\end{itemize}

This suggests an interpretation of $q$-hypergeometric summation as a discrete flow in exponent space. The structure of this flow is independent of the choice of target field and may provide a useful organizing principle for combinatorial identities and analyzing asymptotic behaviour, particularly in regimes where traditional analytic methods become unwieldy.

\subsection{Coherence Identities at the Representation Level}

Quantum recoupling coefficients satisfy several coherence relations. For instance orthogonality conditions (bubble move) and the Biedenharn-Elliott pentagon identity, which ensures the triangulation independence of topological invariants \cite{varshalovich1988quantum,Biedenharn1953,pachner1991pl}. Conventionally, these identities are established as analytical equalities between evaluated physical amplitudes.

Because the DCR isolates amplitudes into canonical combinatorial objects $\mathcal{S}_{\mathrm{DCR}}$, it raises the possibility that such identities may admit a formulation directly at the representation level. In particular, one may ask whether the pentagon identity can be realized as an equality between unprojected cyclotomic exponent data. At present, this remains an open question. A positive answer would imply that topological invariance is rooted in purely combinatorial properties of the exponent structure, independent of analytic features of the deformation parameter. The DCR provides a concrete setting in which this question can be formulated precisely.

\subsection{Semiclassical Limit and Exponent Asymptotics}

The semiclassical behaviour of $q$-deformed amplitudes is typically analyzed via saddle-point analysis of factorial or polynomial expressions \cite{Ponzano1968,Roberts1999,Taylor2005}. In the DCR, factorials are replaced by cyclotomic exponent data, providing an alternative approach for investigating this limit. 

For a cyclotomic monomial $\mathcal{M}$, its logarithm under projection can be into magnitude and phase components as
\begin{equation}
\log \Pi_q(\mathcal{M}) = P \log q + \sum_{d \ge 2} e_d \log \Phi_d(q^2).
\end{equation}
Thus, the asymptotic behaviour is governed by the growth of the exponent vector $\mathbf{e}$ coupled with the analytic properties of $\log \Phi_d(q^2)$.

In the classical limit $q \to 1$, the cyclotomic factors reduce to prime contributions according to Eq.~\eqref{eq:classical_limit}, and the exponent data recovers the prime factorization of classical factorials via Legendre's formula
\begin{equation}
\nu_p(n!) = \sum_{m=1}^{\infty} \left\lfloor \frac{n}{p^m} \right\rfloor, \q \text{for $p$ prime}.    
\end{equation} 
This establishes a direct interpolation between quantum and classical regimes at the level of exponent structure. 

Understanding how the discrete trajectory of the exponent space $(\mathbf{e}_z, P_z)$ encodes macroscopic geometric quantities represents an interesting direction for discrete quantum gravity.In the classical limit, the Ponzano--Regge asymptotic formula establishes that the amplitude oscillates as $\sim (12\pi V)^{-1/2} \cos(S_R + \pi/4)$, where $V$ is the volume of the Euclidean tetrahedron and $S_R$ is the discrete Regge action \cite{Ponzano1968,Roberts1999}. Within the DCR framework, the large-$j$ growth of the exponent vectors, governed by Legendre's formula, naturally controls the scaling of the compiled factorial data and therefore provides the combinatorial structure from which the semiclassical prefactor may ultimately be recovered.

The emergence of the oscillatory phase $S_R$ is traditionally accessed via stationary phase approximations. The DCR suggests a purely combinatorial alternative where the oscillatory Regge action must be analytically encoded within the interference pattern of the phase trajectory $P_z$ and the alternating signs distributed over the sparse cyclotomic support. Furthermore, for $q$ evaluated at a root of unity, the Taylor--Woodward asymptotic formula \cite{Taylor2005} plays the analogous role, where curved simplex geometry replaces Euclidean geometry. Because the DCR completely isolates the $q$-dependence into the projection map, it implies that these curved-geometry asymptotics emerge directly from the analytic properties of $\log \Phi_d(q^2)$ evaluated over the same parameter-independent integer exponents $(\mathbf{e}_z, P_z)$.
Formally extracting this geometric action directly from the discrete integer exponent flow, without relying on continuous analytical approximations, stands as a well-posed open problem. The DCR isolates the relevant combinatorial data in a form perfectly amenable to this analysis, establishing an algebraic foundation for future investigations into the continuum limit of discrete quantum gravity.

\subsection{Connections to Cyclotomic Structures and Quantum Topology}
\label{subsec:cyclotomic_connections}

Cyclotomic structures play a central role in the arithmetic of quantum topology. In particular, quantum invariants of $3$-manifolds evaluated at roots of unity satisfy remarkable integrality properties and admit unified formulations within cyclotomic completions such as Habiro's ring $\widehat{\mathbb{Z}[q]}$ \cite{habiro2004cyclotomic,Gilmer_tqft,garoufalidis2005colored}. Habiro's construction provides a universal algebraic framework in which quantum invariants are represented by a single object whose evaluation at every root of unity yields an algebraic integer. This reveals a deep interplay between topology, number theory, and quantum algebra.

The DCR offers a complementary perspective that operates at a different algebraic level. Whereas Habiro's ring describes the global quantum invariant obtained after summation, the DCR resolves the arithmetic structure locally at the level of the individual $q$-hypergeometric summands. Each summand generally contains genuine denominators arising from quantum factorials and is therefore naturally represented in the localized cyclotomic ring $\mathbb{Z}\!\left[q^{\pm1},\,\Phi_d(q^2)^{\pm1}\right]_{d\ge2}$, where negative exponents record the inverse cyclotomic factors explicitly. The localization is therefore the natural algebraic setting for the intermediate terms of the computation.

This distinction clarifies the respective roles of the two frameworks. Habiro's completion characterizes the arithmetic properties of the final invariant, while the DCR provides a finite combinatorial representation that makes the intermediate cyclotomic factorization explicit throughout the summation. In particular, the exponent vectors record how the localized contributions of individual summands combine so that the poles present in intermediate expressions cancel in the final result. In this sense, the DCR separates the local arithmetic of the summands from the global arithmetic of the completed invariant, providing a computational framework in which these cancellations can be analyzed explicitly. 

The DCR also naturally separates combinatorial structure from arithmetic evaluation. The compiled exponent data are independent of the deformation parameter and of the field in which evaluation takes place.
This viewpoint also highlights the role of cyclotomic fields and their Galois structure.  Evaluation at roots of unity takes place in cyclotomic fields $\mathbb{Q}(\zeta_n)$, whose arithmetic is governed by the Galois group $\mathrm{Gal}(\mathbb{Q}(\zeta_n)/\mathbb{Q})$ \cite{washington97}. Since the exponent vectors are independent of the target field, arithmetic symmetries act only through the projection map rather than the combinatorial representation itself. Establishing a precise relationship between the DCR and cyclotomic completions such as Habiro's ring, and understanding how arithmetic symmetries are reflected through the projection process, remain interesting directions for future work.

\section{Complexity and Stability Analysis}
\label{sec:complexity_error}

In this section, we analyze the computational complexity and numerical stability of the deferred cyclotomic representation (DCR). We shall make use of the main structural feature, that is the separation between (i) the combinatorial construction of a sparse cyclotomic exponent representation and (ii) its evaluation via a projection $\Pi_q$ into a target field $\mathbb{K}$. This separation allows us to isolate intrinsic algebraic complexity from representation-dependent numerical effects.

The central claim of this section is that the DCR shifts computational complexity away from algebraic manipulation and into a controlled projection stage. In particular, it significantly reduces representation-induced complexity, both intermediate expression swell in symbolic computation and dynamic range inflation in numerical evaluation, leaving only the intrinsic complexity of the summation itself.

\subsection{Construction and Evaluation Complexity}

Let $Z = z_{\max} - z_{\min}$ denote the summation length and $D_{\max}$ the maximal cyclotomic order appearing in the factorization. Each intermediate object is represented by a sparse exponent vector $\mathbf{e} \in \Z^{D_{\max}-1}$. We assume a bit-complexity model in which multiplication of $L$-bit integers has cost $M(L)$, and floating-point arithmetic has constant cost. Each update ratio involves at most $K$ cyclotomic factors, each affecting a sparse subset of size $S_{\max}$ of the exponent vector which depends on the divisor function $\tau(n)$. 
\begin{theorem}[Construction complexity]
The compilation of the DCR combinatorial object requires
\begin{equation}
\mathcal{O}\!\left(D_{\max} + Z \cdot K \cdot S_{\max} \right) \q 
\end{equation}
integer operations and stores a representation of the same asymptotic size.
\end{theorem}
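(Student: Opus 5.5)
The plan is to walk through Algorithm~\ref{alg:compilation} line by line, charge each step a number of integer operations, and then add the charges. The first ingredient is a precomputed divisor table. For every $n \le D_{\max}$ we store the list of divisors $d>1$ of $n$. A sieve builds this table: for each $d$, append $d$ to the lists of $d, 2d, 3d,\dots$. This costs $\sum_{d\le D_{\max}} D_{\max}/d = \mathcal{O}(D_{\max}\log D_{\max})$ operations. To reach the stated $\mathcal{O}(D_{\max})$ term, I would either absorb the logarithm into the $D_{\max}$ term, or store only a smallest-prime-factor table, which a linear sieve builds in $\mathcal{O}(D_{\max})$, and enumerate divisors on demand. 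On-demand enumeration costs $\mathcal{O}(\tau(n))\le S_{\max}$ per quantum integer, so it is charged to the per-factor cost below. With either table in hand, Proposition~\ref{prop:cyclo_fact} gives the exponent vector of $[n]_q$ by reading off its divisor list in $\mathcal{O}(\tau(n))$ operations, together with the single phase entry $1-n$.

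Next come the boundary objects $\mathcal{M}_{\rm geom}$, $\mathcal{M}_{\rm base}$ and the square-root split. Each quantum factorial $[n]_q!$ with $n\le D_{\max}$ is encoded, by Eq.~\eqref{eq:qfact_cyclo}, as the vector $(\lfloor n/d\rfloor)_{d=2}^{n}$, which takes $\mathcal{O}(n)=\mathcal{O}(D_{\max})$ operations. There is a fixed number of such factorials: the four triangle coefficients and the $8$ factorials of the $z_{\min}$ summand in Eq.~\eqref{eq:racah_6j}, or in general as many as appear in Eq.~\eqref{eq:generic_q_series}. Combining them by Eq.~\eqref{eq:integer_ops} is componentwise addition of vectors of length $\le D_{\max}$. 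Splitting into $\mathcal{M}_{\rm root}$ and $\mathcal{M}_{\rm rad}$ is one parity test and one halving per coordinate, again $\mathcal{O}(D_{\max})$. These stored objects have support at most $D_{\max}$.

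The loop runs $Z$ times. At each step, $\mathcal{R}_z$ in Eq.~\eqref{eq:ratio_Rz} is a signed product of at most $K$ quantum integers (for the $6j$-symbol, $K=8$). Each factor contributes, through Eq.~\eqref{eq:qint_cyclo}, one phase update and at most $\tau(n)\le S_{\max}$ sparse exponent updates $e_d\mathrel{\pm}=1$. In a hash map each update costs $\mathcal{O}(1)$ amortized, or $\mathcal{O}(1)$ worst case if we use a dense scratch array of length $D_{\max}$ that is allocated once and reset only on touched indices. So one $\mathcal{R}_z$ costs $\mathcal{O}(K S_{\max})$ operations and has support at most $K S_{\max}$. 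Summing over all steps gives $\mathcal{O}(Z K S_{\max})$ for both time and storage. Adding the boundary contributions gives $\mathcal{O}(D_{\max}+Z K S_{\max})$ for both, which proves the claim.

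The main obstacle is pinning down what an ``integer operation'' means. The exponents are bounded by $D_{\max}$ and the phases by $\mathcal{O}(D_{\max}^2)$, so every entry fits in $\mathcal{O}(\log D_{\max})$ bits. I will state explicitly that such integers are treated as unit-cost, which matches the word-RAM model implicit in the statement; the bit cost $M(L)$ only enters later, at projection. A second, smaller point is the divisor-table logarithm. I would handle it with the linear-sieve variant described above, so that no hidden $\log D_{\max}$ factor remains.
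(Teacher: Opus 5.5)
Your proposal is correct and uses the same accounting as the paper's sketch: an $\mathcal{O}(D_{\max})$ precomputation term, plus $Z$ update ratios, each with at most $K$ quantum-integer factors, each factor touching at most $S_{\max}$ exponent entries. You simply supply details the sketch leaves implicit, namely charging the fixed number of factorials in $\mathcal{M}_{\rm base}$ and $\mathcal{M}_{\rm geom}$ (and the root/radical split) to the $D_{\max}$ term, and using a linear smallest-prime-factor sieve so that no hidden $\log D_{\max}$ factor remains.
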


\begin{proof}[Sketch]
Precomputation of cyclotomic data up to $D_{\max}$ contributes $\mathcal{O}(D_{\max})$. Each of the $Z$ update ratios modifies at most $K$ factors, and each factor updates at most $S_{\max}$ sparse entries, yielding the stated bound.
\end{proof}
For the quantum $6j$-symbols (or generic quantum recoupling coefficients), $K$ is constant and $S_{\max}$ grows slowly with $D_{\max}$, yielding near-linear scaling in $Z$.

\medskip
Evaluation consists of projecting the exponent representation into a target field $\mathbb{K}$ via the map $\Pi_q$.
\begin{proposition}[Evaluation complexity]
Let $\mathbb{K}$ be the target field and $C_{\mathbb{K}}$ the cost of one arithmetic operation in $\mathbb{K}$. Then evaluation of the DCR requires
\begin{equation}
\mathcal{O}\!\left( D_{\max} \cdot C_{\mathbb{K}} + Z \cdot K \cdot S_{\max} \cdot C_{\mathbb{K}} \right)
\end{equation}
operations for the full summation.
\end{proposition}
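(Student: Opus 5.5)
The argument follows Algorithm~\ref{alg:projection} line by line and charges each step a number of field operations in $\mathbb{K}$; multiplying by $C_{\mathbb{K}}$ then gives the bound. There are two phases, mirroring the two terms in the statement: a parameter-dependent precomputation of the table $\{\Pi_q(\Phi_d(q^2))\}_{d\le D_{\max}}$, followed by a sweep over the $Z$ stored ratios $\mathcal{R}_z$.

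\emph{Precomputation.} First I bound the cost of building the table. Powers $q^{2n}$ for $n\le D_{\max}$ come from $D_{\max}$ successive multiplications. For the Möbius-type recursion
$\Phi_n(q^2) = (q^{2n}-1)/\prod_{d\mid n,\,d<n}\Phi_d(q^2)$,
the naive count is $\sum_{n\le D_{\max}}\tau(n)=\mathcal{O}(D_{\max}\log D_{\max})$ operations. This is the only point where the stated $\mathcal{O}(D_{\max}\, C_{\mathbb{K}})$ is not immediate. I would handle it in one of two ways. One option is to treat the divisor-sum overhead as absorbed in the same convention used for the construction theorem, where precomputation of cyclotomic data was charged $\mathcal{O}(D_{\max})$. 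The other is to avoid the recursion by a sieve: for each $d$, divide its contribution out of the multiples $n$ of $d$ in a running array initialised to $q^{2n}-1$. That sieve also costs $\sum_d D_{\max}/d$, so it does not remove the logarithm. I would therefore make the harmonic-sum factor explicit as the main obstacle. Concretely, I would state that the precomputation is $\mathcal{O}(D_{\max})$ up to the same divisor-function factor already hidden in $S_{\max}$, and note that only the $d$ that actually occur in the support of the compiled data need to be tabulated. The constant inversions (one per $n$, or a batch inversion) contribute $\mathcal{O}(D_{\max})$ operations.

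\emph{Sweep.} Next I charge the loop. For each $z$, evaluating $v=\Pi_q(\mathcal{R}_z)$ requires reading the sparse support of $\mathcal{R}_z$, which has at most $K\cdot S_{\max}$ nonzero entries by the same accounting as in the construction theorem. Each entry requires a table lookup and a power $\Phi_d(q^2)^{e_d}$. Because the update exponents are bounded integers (the ratio in Eq.~\eqref{eq:ratio_Rz} involves only single quantum integers, so $|e_d|\le K$), these powers cost $\mathcal{O}(1)$ multiplications and at most one inversion. The phase $q^{P}$ is a single lookup or power, and the sign costs nothing. The updates $R\gets R v$ and $S_{\rm sum}\gets S_{\rm sum}+\Pi_q(\mathcal{M}_{\rm base})R$ add $\mathcal{O}(1)$ operations, once $\Pi_q(\mathcal{M}_{\rm base})$ has been computed a single time. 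Summing over the $Z$ ratios gives $\mathcal{O}(Z K S_{\max})$ operations; early termination can only reduce this.

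\emph{Boundary terms.} Finally, projecting $\mathcal{M}_{\rm base}$, $\mathcal{M}_{\rm root}$ and $\mathcal{M}_{\rm rad}$ costs time linear in their supports. Those supports have at most $D_{\max}-1$ entries, and their exponents are polynomial in the spins, so the powers cost logarithmic repeated squaring. The final square root is one operation. All of this is dominated by the $D_{\max}C_{\mathbb{K}}$ term, up to logarithmic factors that I would absorb in the same way as in the precomputation step. Adding the two phases and multiplying by $C_{\mathbb{K}}$ yields the claimed bound.
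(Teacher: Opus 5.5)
Your accounting of the sweep and of the boundary monomials is sound. It is the accounting the paper has in mind: the paper gives no separate proof, and the bound is the evaluation-side analogue of the construction sketch, charging $\mathcal{O}(D_{\max})$ for cyclotomic precomputation and $\mathcal{O}(KS_{\max})$ field operations per ratio. The boundary terms need no logarithmic slack at all. For these factorial-derived monomials $|e_d|=\mathcal{O}(D_{\max}/d)$, so repeated squaring costs $\sum_{d\le D_{\max}}\mathcal{O}(1+\log(D_{\max}/d))=\mathcal{O}(D_{\max})$ by Stirling.

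\emph{The gap is the first term.} You correctly find that the M\"obius recursion (and your sieve) costs $\mathcal{O}(D_{\max}\log D_{\max})$. You cannot absorb that logarithm into the divisor factor in $S_{\max}$, because in the statement $S_{\max}$ multiplies $Z$, not $D_{\max}$, and $Z$ can be far smaller than $D_{\max}$. For instance, $j_6=0$ forces $j_5=j_1$ and $j_4=j_2$, hence $z_{\min}=z_{\max}=j_1+j_2+j_3$, so $Z=0$ while $D_{\max}=j_1+j_2+j_3+1$. Tabulating only the support does not rescue it either. The base monomial has $e_d\neq0$ on an interval of length $\Theta(D_{\max})$ (every $d\in(j,3j+1]$ in the symmetric case), and the recursion for $\Phi_n(q^2)$ needs all divisors of $n$. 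As written, your argument proves the bound with $D_{\max}\log D_{\max}$ in place of $D_{\max}$.

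The paper does not supply the missing step either. It charges the precomputation $\mathcal{O}(D_{\max})$ by fiat, while Section~\ref{sec:universal_projections} quotes $\mathcal{O}(D\log D)$ for the very same table. So you must either weaken the statement or add an idea.

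If you settle for a weaker statement, a prime-by-prime multiplicative M\"obius sieve brings the logarithm down to $\log\log D_{\max}$. Start from $g(n)=q^{2n}-1$. For each prime $p\le D_{\max}$, replace $g(n)\leftarrow g(n)/g(n/p)$ for the multiples $n\le D_{\max}$ of $p$, taken in decreasing order. Since $\sum_n\mu(n)n^{-s}=\prod_p(1-p^{-s})$, this leaves $g(n)=\Phi_n(q^2)$ after $\sum_{p\le D_{\max}}D_{\max}/p$ divisions.

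A genuinely linear field-side cost (for generic $q$) comes from moving the M\"obius inversion into the parameter-independent integer data, via $\prod_{d\ge2}\Phi_d(q^2)^{e_d}=\prod_{m\ge1}(q^{2m}-1)^{f_m}$ with $f_m=\sum_{k\ge1,\,mk\ge2}\mu(k)\,e_{mk}$. Then:
\begin{itemize}
\item the only table needed is $\{q^{2m}-1\}_{m\le D_{\max}}$;
\item each ratio, being a product of $K$ quantum integers, has $\mathcal{O}(K)$ entries in this basis;
\item the Stirling estimate still applies to the $f_m$;
\item the $\log\log$ cost moves into the one-time compilation.
\end{itemize}

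Finally, your recursion is $0/0$ at $q=e^{i\pi/h}$ for $n\in\{2h,3h,\dots\}$, and for every $n$ at $q=1$. It is safe for admissible $6j$ data, where $D_{\max}\le 2k+1<2h$. The classical projection, however, must use the closed form for $\Phi_d(1)$.
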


For floating-point arithmetic, $C_{\mathbb{K}} = \mathcal{O}(1)$, thus evaluation is effectively $\mathcal{O}(Z)$ up to divisor-function corrections. For evaluation in exact cyclotomic number fields, elements of $\mathbb{Q}(\zeta_n)$ have dimension $\varphi(n)$, i.e., Euler's totient function, so $C_{\mathbb{K}} = \mathcal{O}(\varphi(n) \cdot M(L))$. This yields an evaluation cost $ \sim \mathcal{O}\!\left(
Z \cdot K \cdot S_{\max} \cdot \varphi(D_{\max}) \cdot M(L)
\right). $

\subsection{Numerical Stability and Conditioning}

Conventional approaches rely on \emph{eager evaluation}, which refer to a paradigm where quantum factorials are immediately expanded into dense polynomial representations or directly accumulated in floating-point arithmetic prior to any structural simplification. In exact symbolic computation, this eager expansion generates intermediate objects whose polynomial degree scales directly with $D_{\max}$. This necessitates repeated polynomial multiplication and costly GCD reductions, producing superlinear memory and runtime growth. 
In contrast, the DCR replaces these dense algebraic objects with sparse integer exponent vectors. By deferring evaluation, intermediate expression swell is eliminated, and all asymptotic computational cost is shifted to the final projection stage.

Let $ S = \sum_z (-1)^z T_z$ denote the target alternating sum. The intrinsic sensitivity of the problem is governed by the condition number $\kappa$ (Eq.~\eqref{eq:condition_num}), which is independent of representation. In floating-point arithmetic with unit roundoff $\epsilon$, the forward error satisfies
\begin{equation}\label{eq:forwd_error}
\frac{|\widehat{S} - S|}{|S|} = \mathcal{O}\!\left(\kappa \,  Z \,\epsilon\right) + \mathcal{O}\!\left(\kappa \, \gamma_{\text{rep}} \, \epsilon\right).
\end{equation}
It is important to formally distinguish these two independent sources of numerical error to properly interpret the stability benchmarks. 

The first term is governed by the condition number $\kappa$ (Eq.~\ref{eq:condition_num}), which captures the intrinsic sensitivity of the alternating sum. This quantity is an invariant property of the physical amplitude itself, thus no choice of mathematical representation can reduce it, and it establishes a strict theoretical floor on precision loss. The second term captures the representation-induced amplification. Here, $\gamma_{\mathrm{rep}}$ is defined as 
\begin{equation}\label{eq:gamma_rep}
\gamma_{\text{rep}} = \max_z (\log_{10}|N_z^{\rm rep}| + \log_{10}|D_z^{\rm rep}|),
\end{equation}
which measures the maximum dynamic range of the intermediate numerator $N_z$ and denominator $D_z$ constructed when evaluating the summand $T_z = N_z/D_z$.
Unlike $\kappa$, the amplification factor $\gamma_{\text{rep}}$ isolates the contribution of the representation choice to the total numerical error. It quantifies the artificial dynamic range introduced by the chosen algebraic form of the summands prior to any cancellation.

In eager logarithmic evaluation (LSE) based on the standard polynomial representation, each summand is computed via the floating-point subtraction
\begin{equation}\label{eq:logT}
\log T_z = \log N_z - \log D_z.
\end{equation}
Because the eager polynomial representation forces algebraic cancellations to occur only after expansion, the unreduced components $N_z$ and $D_z$ grow into exponentially large, closely matched quantities. The subtraction of these magnitudes in Eq.~\eqref{eq:logT} introduces severe rounding error, yielding large amplification factor $\gamma_{\text{eager}} \gg 1$.

In contrast, the DCR performs all cancellations symbolically at the level of exponent vectors before numerical evaluation. The quantities entering the projection step are therefore already reduced to their minimal form, avoiding subtraction of large nearly equal numbers. As a result,
\[ \gamma_{\text{DCR}} \ll \gamma_{\text{eager}}, \]
and the numerical error is driven primarily by the intrinsic condition number $\kappa$. The exact analytic relationship between these error bounds is left for future work.

\medskip

In summary, eager polynomial representations entangle combinatorial growth with evaluation, leading to both expression swell and amplified rounding error. The DCR on the other hand isolates intrinsic computational difficulty from representation-induced artifacts. It removes intermediate algebraic growth through exact symbolic cancellation and reduces numerical instability by compressing the dynamic range prior to evaluation. Consequently, the DCR achieves near-optimal scaling with respect to summation length and approaches the minimal error dictated by the intrinsic conditioning of the problem.

These improvements are not algorithmic refinements, but consequences of aligning the representation with the underlying algebraic structure. See Section~\ref{subsec:cancellation} for empirical results.

\subsection{Macroscopic Complexity and State-Sum Amplitudes}
\label{subsec:state_sums}

The preceding analysis characterizes the complexity and numerical stability of individual $q$-hypergeometric series. However, physical observables in topological quantum field theory and spin foam models arise from large-scale contractions of such amplitudes \cite{barrett2007observables,dupuis2014observables,Dittrich:2018dvs,Bonzom:2022bpv}. The principal advantage of the deferred cyclotomic representation emerges in this macroscopic regime, where it enables a separation between local combinatorial structure and global summation.

For composite amplitudes defined by tensor products, the ring homomorphism property of the projection map $\Pi_q$ ensures that
\begin{equation}
\Pi_q(S_1 \cdot S_2) = \Pi_q(S_1) \cdot \Pi_q(S_2).
\end{equation}
As a consequence, composite amplitudes factorize at the representation level. Each elementary building block (e.g.\ a quantum $6j$-symbol) can be compiled independently into its DCR form, and evaluation of composite expressions reduces to multiplication in the target field $\mathbb{K}$. This induces a \emph{compilation paradigm}, where local geometric contributions are precomputed once as reusable combinatorial objects, while global observables are obtained through parameter-dependent projection and aggregation.

\medskip
\paragraph*{Application to state-sum models.}

The computational advantages of the DCR (via amortization) become particularly relevant in settings where the same local amplitudes are evaluated repeatedly, such as topological state sums, spin-foam models, and tensor-network algorithms. The essential observation is that these frameworks separate into two distinct computational layers: a local layer involving the evaluation of recoupling coefficients and a global layer involving the summation or contraction over many local configurations. The DCR addresses the first of these layers by replacing repeated algebraic reconstruction with a compile-once, evaluate-many-times strategy.

As an example, consider the Turaev--Viro partition function on a triangulated three-manifold $\mathcal{M}$,
\begin{equation}\label{eq:TV}
\mathrm{TV}_k(\mathcal{M}) = \mathcal{N}^{-|v|} \sum_{ \{j\} }\prod_{\tau \in \Delta_3(\mathcal{M})} \sixj_q^{\tau},
\end{equation}
where the summation runs over admissible colorings $\{j\}$ and each tetrahedron contributes a quantum $6j$-symbol. $\cal N$ is a global normalization constant and $|v|$ denotes the number of vertices. In a direct implementation, every occurrence of a local amplitude requires the construction and numerical evaluation of the corresponding $q$-hypergeometric expression. For large discrete levels $k$ or large spin labels, this repeated reconstruction introduces significant algebraic overhead due to polynomial growth, expression swell, and repeated cancellation-sensitive evaluations.

The DCR reorganizes the computational workflow by separating the construction of the algebraic object from its numerical realization. For a fixed local spin configuration, the factorial structure of the corresponding amplitude is compiled once into a sparse cyclotomic exponent representation. This representation is independent of the subsequent choice of evaluation parameter or target field and can therefore be reused whenever the same local amplitude appears again. Subsequent evaluations require only the projection of the precomputed exponent data. 

This amortization is particularly valuable for tensor-network and state-sum approaches, where local recoupling coefficients form the elementary building blocks of large-scale computations. Coarse-graining and renormalization steps in spin foam models and tensor network approaches repeatedly evaluate local recoupling amplitudes across different spin configurations and levels. The DCR preserves these local amplitudes as exact algebraic objects while allowing numerical specialization to be deferred until required by the final observable.

It is important to emphasize that the DCR does not remove the intrinsic global complexity of state-sum evaluations or tensor-network contractions. The exponential growth associated with summing over internal colorings, large-scale tensor contractions, or Monte Carlo sampling remains a property of the underlying physical model. Rather, the contribution of the DCR is to eliminate the repeated local algebraic overhead that currently bottlenecks exact topological computations \cite{Jaeger1990}, large-scale spin foam simulations \cite{Engle:2007wy,Delcamp:2016yix,Dona2018,Dona:2022yyn,Asante:2024eft}, and recent efforts to extract macroscopic effective dynamics \cite{Asante:2021zzh,Asante:2020qpa,Han:2021kll,Asante:2022lnp}. In this sense, the DCR provides an algebraic preconditioning layer between the exact local amplitudes and the global combinatorial computation.

\section{Performance and Numerical Validation}
\label{sec:numerical_results}

We now empirically validate the DRC architecture by benchmarking memory usage, numerical stability, and execution latency. These results confirm the practical feasibility of the framework and validate the theoretical complexity bounds established in Section~\ref{sec:complexity_error}.

Throughout this section, we contrast the DCR against conventional \emph{eager evaluation} baselines. We use the term ``eager'' to denote any strategy that forces the algebraic or numerical resolution of the amplitude prior to structural simplification. Specifically:
\begin{itemize}
\item In exact symbolic arithmetic (Eager CAS), eager evaluation refers to the standard approach of expanding the hypergeometric series into dense polynomial or rational representations and relying on algorithmic greatest common divisor (GCD) reductions.
\item In floating-point arithmetic (Eager LSE), it refers to the direct logarithmic summation of expanded $q$-factorial expressions, where cancellations are attempted numerically rather than resolved algebraically.
\end{itemize}

All experiments use the symmetric quantum $6j$-symbol ($j_i = j$) as a representative test case (see Eq.~\eqref{eq:racah_6j}). Computations are implemented in the open-source Julia package \texttt{QRecoupling.jl} \cite{QRecoupling_pkg}\footnote{This library was developed in conjunction with this manuscript to natively implement the DCR framework and ensure exact reproducibility of all presented benchmarks.}, leveraging multiple dispatch and type stability to ensure no runtime overhead from the abstract algebraic structures. Floating-point computations use IEEE 754 double precision (\texttt{Float64}), while arbitrary precision uses \texttt{BigFloat} (MPFR). Exact algebraic evaluation uses \texttt{Nemo.jl}, interfacing with FLINT and Antic \cite{fieker2017nemo}. Benchmarks were run on an Apple Silicon M2 system with 16 GB memory.

\subsection{Memory Usage and Complexity}

\begin{figure*}[htpb]
\centering
\includegraphics[width=0.65\textwidth]{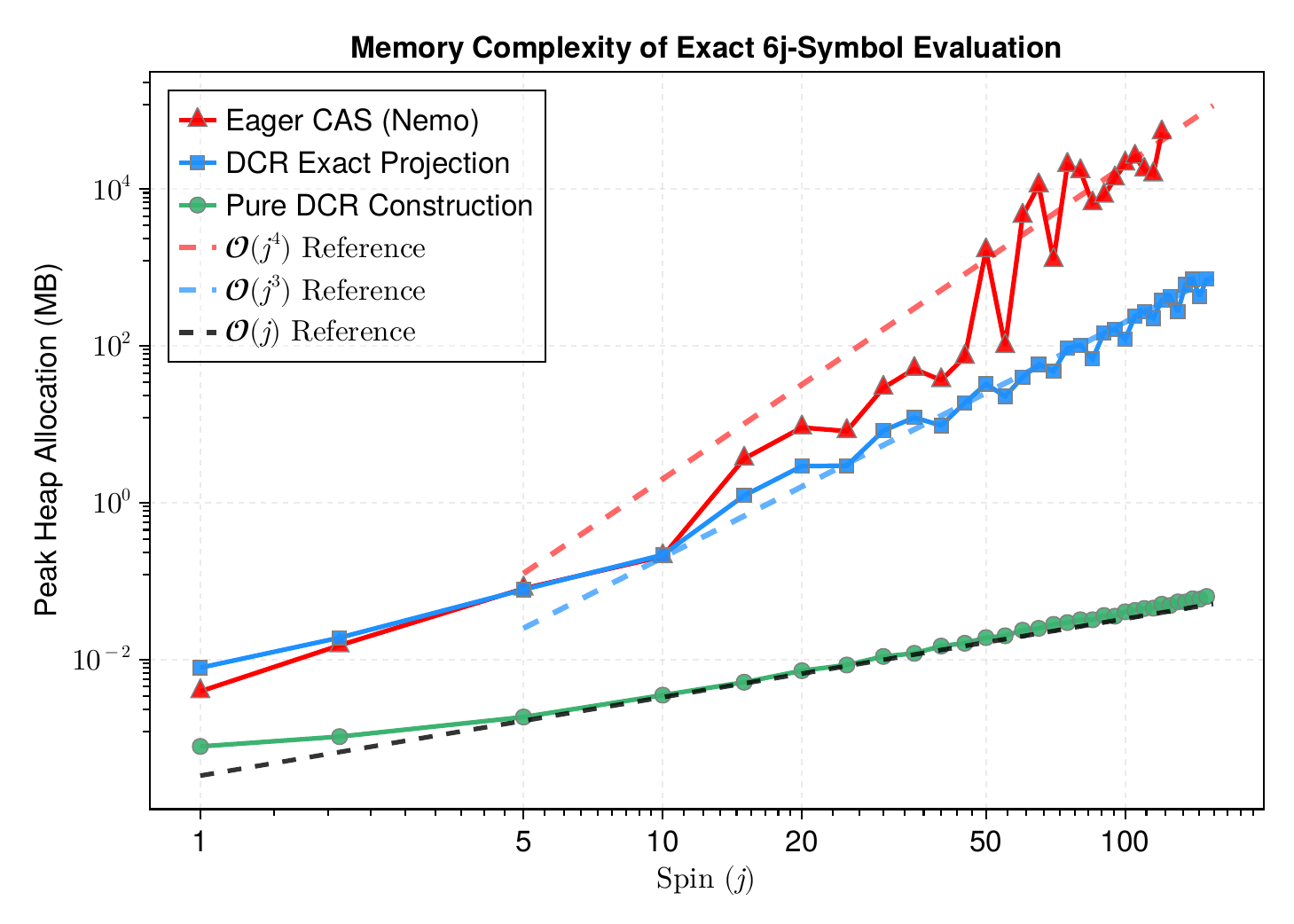}
\caption{Peak heap memory allocation for the exact algebraic evaluation of the symmetric $6j$-symbol. The DCR construction scales linearly with spin ($\mathcal{O}(j)$). Projecting the DCR into the exact field scales as $\mathcal{O}(j^3)$. In contrast, the eager CAS baseline triggers intermediate expression swell that scales worse than $\mathcal{O}(j^4)$, exceeding 50 GB by $j=120$.}
\label{fig:memory_benchmark}
\end{figure*}

A key limitation of exact $q$-hypergeometric evaluation is the rapid growth of intermediate symbolic expressions. To quantify this, we measure peak memory allocation for the symmetric $6j$-symbol at topological level $k = 4j$ across three regimes:
\begin{enumerate}
\item {\it DCR Construction:} Sparse integer exponent data $\{ {\cal M}_{\rm rad}, {\cal M}_{\rm root}, {\cal M}_{\rm base}, \{\mathcal R_z \} \}$. 
\item {\it Exact Projection:} Projection into a cyclotomic field, including the cost of DCR construction.
\item {\it Eager CAS Evaluation:} Direct rational hypergeometric summation with iterative GCD reductions.
\end{enumerate}

To ensure a rigorous comparison, the eager CAS baseline is highly optimized by evaluating the series iteratively using the exact rational hypergeometric update ratios ($\mathcal{R}_z$). At each step of the sum, the CAS multiplies the current rational term by $\mathcal{R}_z$ and immediately performs a polynomial greatest common divisor (GCD) reduction to maintain the expression in canonical lowest terms.

The peak allocations were measured using a double-pass execution strategy with forced garbage collection to eliminate JIT-compilation and LRU-caching artifacts. The memory snapshots are detailed in Table \ref{tab:memory_snapshots}, and the continuous scaling is plotted in Figure \ref{fig:memory_benchmark}.

\begin{table}[htpb]
\centering
\caption{Peak memory allocations for exact evaluation of the symmetric $6j$-symbol. The DCR remains compact (KB scale), exact projection grows predictably (remaining under 380 MB at $j=120$), while the eager CAS approach suffers intermediate expression swell (exceeds 50 GB at $j=120$).}
\label{tab:memory_snapshots}
\begin{tabular}{@{\hspace{5pt}}c@{\hspace{5pt}}cc@{\hspace{5pt}}c@{\hspace{5pt}}}
\toprule
\textbf{Spin $j$} & \textbf{DCR Build} & \textbf{Exact Projection } & \textbf{Eager CAS} \\
\midrule
20  & $7.2$ KB  & $2.89$ MB   & $9.07$ MB \\
40  & $14.8$ KB & $9.60$ MB   & $36.32$ MB \\
60  & $23.6$ KB & $39.74$ MB  & $4.52$ GB \\
75  & $29.5$ KB & $93.30$ MB  & $\bf 20.20 \; GB$ \\
120 & $50.6$ KB & $377.5$ MB & $\bf 52.47 \; GB$ \\
\bottomrule
\end{tabular}
\end{table}

\begin{table*}[htpb]
\centering
\caption{Evaluation of the symmetric $6j$-symbol at $k=500$. The eager LSE loses reliability beyond $j\gtrsim90$ due to catastrophic cancellation. DCR projection preserves correct signs and significantly reduces error in double precision.}
\label{tab:phase_flip}
\begin{tabular}{@{}r@{\hspace{15pt}}l@{\hspace{12pt}}l@{\hspace{12pt}}l@{}}
\toprule
\textbf{Spin $j$} & \textbf{Eager LSE (Float64)} & \textbf{DCR-Projection (Float64)} & \textbf{Truth (2048-bit)} \\
\midrule
30  & $-1.0930 \times 10^{-3}$ & $-1.0930 \times 10^{-3}$ & $-1.0930 \times 10^{-3}$ \\
50  & $\hphantom{-}9.1082 \times 10^{-4}$ & $\hphantom{-}9.1082 \times 10^{-4}$ & $\hphantom{-}9.1082 \times 10^{-4}$ \\
70  & $-7.6406 \times 10^{-4}$ & $-7.6286 \times 10^{-4}$ & $-7.6283 \times 10^{-4}$ \\
90  & $\mathbf{\hphantom{-}3.5642 \times 10^{-4}}$ & $\mathbf{-6.6327 \times 10^{-4}}$ & $\mathbf{-6.4428 \times 10^{-4}}$ \\
110 & $-9.6881 \times 10^{-1}$ & $\hphantom{-}1.5083 \times 10^{-3}$ & $\hphantom{-}2.8290 \times 10^{-4}$ \\
\bottomrule
\end{tabular}
\end{table*}

The key observations are 
\begin{itemize}
\item {\it DCR Construction:} Memory scales linearly with spin ($\mathcal{O}(j)$), costing under 51 kilobytes at $j=120$.
\item {\it Exact Projection:} Memory scales roughly cubically due to summation length, cyclotomic field dimension, and coefficient growth, consistent with Section~\ref{sec:complexity_error}. The observed oscillations are driven by the Euler totient function $\varphi(d)$, which dictates the exact degree of the underlying cyclotomic field.
\item {\it Eager CAS:} Polynomial GCD reductions cannot prevent expression swell, exceeding 50GB by $j=120$.
\end{itemize}

Overall, these results confirm the theoretical picture developed in Section~\ref{sec:complexity_error}. The DCR isolates combinatorial complexity from arithmetic growth, ensuring memory remains predictable and bounded prior to projection.

\begin{figure*}[htpb]
\centering
\includegraphics[width=0.97\linewidth]{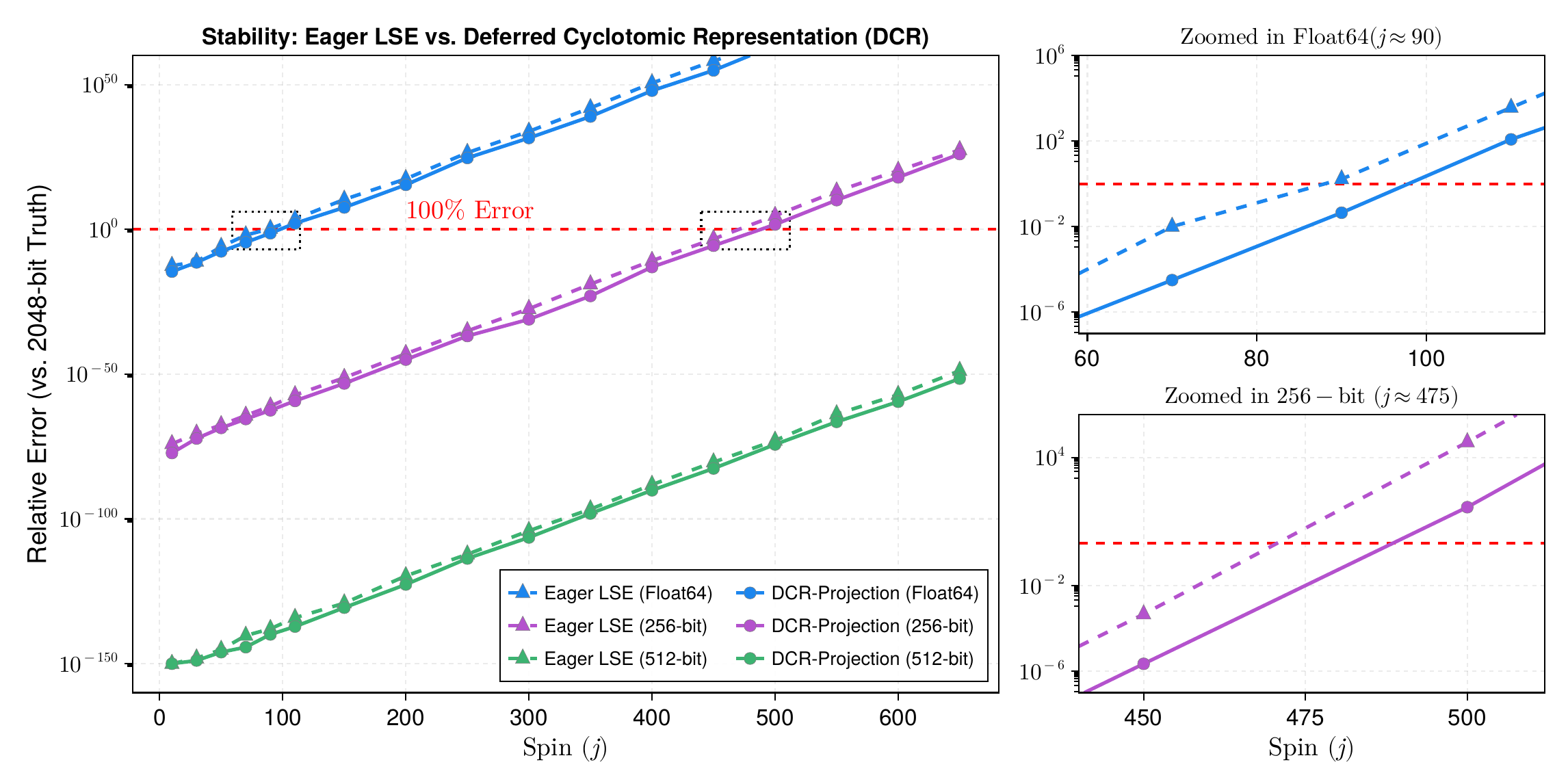}
\caption{Relative error of the symmetric $6j$-symbol at $k=10j$ for multiple precisions. Dashed lines denote the eager LSE baseline and solid lines the DCR projection. The red dashed line marks the 100\% relative error threshold. Right panel: Insets zoom into the onset of catastrophic cancellation. DCR systematically delays large errors and reduces their magnitude.}
\label{fig:stability_benchmark}
\end{figure*}

\subsection{Numerical Stability and Algebraic Preconditioning}
\label{subsec:stability_test}

We now empirically validate the stability properties established in Section~\ref{sec:complexity_error}. Specifically, we quantify the reduction of the representation-dependent amplification factor $\gamma_{\text{rep}}$ Eq.~\eqref{eq:gamma_rep} and its direct impact on finite-precision evaluation.

\medskip
\paragraph*{Dynamic Range Compression.} A key mechanism underlying the improved stability of the DCR is the exact algebraic cancellation of intermediate magnitudes prior to numerical evaluation. To quantify this preconditioning, Table~\ref{tab:dynamic_range} compares the peak dynamic range generated by eager polynomial evaluation ($\gamma_{\mathrm{eager}}$) against the symbolically reduced cyclotomic representation ($\gamma_{\mathrm{DCR}}$). We also track the intrinsic condition number ($\log_{10}\kappa$), which dictates the absolute theoretical minimum for precision loss.

\begin{table}[htpb]
\centering
\caption{Amplification factors for symmetric $6j$-symbols. Both $\gamma_{\mathrm{eager}}$ and $\gamma_{\mathrm{DCR}}$ are dimensionless log-scale measures of intermediate dynamic range. $\Delta\gamma$ represents the orders of magnitude compressed by symbolic preconditioning. The $\log_{10}\kappa$ column establishes the intrinsic precision loss.}
\label{tab:dynamic_range}
\begin{tabular}{@{}cc@{\hspace{12pt}}c@{\hspace{12pt}}c@{\hspace{12pt}}c@{\hspace{12pt}}c@{}}
\toprule
\textbf{Spin $j$} &  $k$ & \textbf{ $\log_{10}\kappa$} & $\boldsymbol{\gamma}_{\mathbf{eager}}$ & $\boldsymbol{\gamma}_{\mathbf{DCR}}$ & \textbf{ $\Delta\boldsymbol{\gamma}$} \\ 
\midrule
10  & 40   & 1.27  & 61.1   & 19.0   & 42.1   \\
50  & 200  & 7.10  & 560.1  & 104.5  & 455.6  \\
100 & 400  & 14.39 & 1352.7 & 212.5  & 1140.2 \\
200 & 800  & 28.97 & 3177.3 & 429.1  & 2748.2 \\
400 & 1600 & 58.12 & 7307.1 & 862.8  & 6444.3 \\
500 & 2000 & 72.70 & 9518.5 & 1079.8 & 8438.8 \\
\bottomrule
\end{tabular}
\end{table}

As illustrated in Table~\ref{tab:dynamic_range}, eager evaluation inflates intermediate expressions by thousands of decades. At $j=500$, the subtraction of unreduced polynomial components introduces an amplification factor of $\gamma_{\mathrm{eager}} = 9518.5$. The DCR algebraically eliminates this bloat at the exponent level, compressing the dynamic range by over $8400$ orders of magnitude ($\Delta\gamma$). This ensures the numerical processor only evaluates the irreducible core of the amplitude.

Although these quantities are not explicitly formed in floating-point arithmetic, they determine the scale of intermediate cancellations and therefore directly influence numerical error. As shown in Table~\ref{tab:dynamic_range}, the DCR consitently reduces intermediate magnitudes by several orders, leading to a substantial improvement in stability. This behavior is consistent with the reduction in $\gamma_{\mathrm{rep}}$ predicted in Section~\ref{sec:complexity_error}.

\begin{table*}[htpb]
\centering
\caption{Median steady-state execution latencies (milliseconds) for symmetric $6j$-symbols at $k=4j$. The DCR architecture isolates  combinatorial complexity into a one-time algebraic construction (Build), allowing subsequent numerical evaluations (Projection) to rival the speed of optimized eager LSE baselines.}
\label{tab:latency}
\begin{tabular}{@{}l@{\hspace{12pt}}l@{\hspace{12pt}}c@{\hspace{15pt}}cc@{\hspace{15pt}}cc@{}}
\toprule
& &  & \multicolumn{2}{c}{\textbf{Float64 (ms)}} & \multicolumn{2}{c}{\textbf{256-bit BigFloat (ms)}} \\
\cmidrule(lr){3-3} \cmidrule(lr){4-5} \cmidrule(lr){6-7}
\textbf{Spin $j$} & \textbf{Level $k$} & \textbf{DCR Build} & \textbf{Eager LSE} & \textbf{DCR Projection} & \textbf{Eager LSE} & \textbf{DCR Projection} \\
\midrule
50  & 200  & $0.0209$ & $0.0006$ & $0.0007$ & $0.1622$ & $0.2195$ \\
100 & 400  & $0.0824$ & $0.0009$ & $0.0014$ & $0.3163$ & $0.4607$ \\
200 & 800  & $0.3188$ & $0.0015$ & $0.0031$ & $0.6373$ & $0.9861$ \\
300 & 1200 & $0.6732$ & $0.0021$ & $0.0053$ & $0.9530$ & $1.5305$ \\
400 & 1600 & $1.1638$ & $0.0027$ & $0.0075$ & $1.2716$ & $2.0571$ \\
500 & 2000 & $1.7698$ & $0.0031$ & $0.0094$ & $1.6052$ & $2.6260$ \\
\bottomrule
\end{tabular}
\end{table*}

\medskip
\paragraph*{Finite-Precision Accuracy.}
To assess the practical impact of this preconditioning, we evaluate the symmetric $6j$-symbol at fixed level ($k=500$) in double precision and compare against a 2048-bit reference (Table~\ref{tab:phase_flip}). The standard LSE implementation begins to exhibit significant deviation at moderate spins and produces sign errors by $j \approx 90$, indicating loss of numerical reliability. In contrast, the DCR-based evaluation maintains the correct sign and substantially improved accuracy over a wider range.

This behavior is further illustrated in Figure~\ref{fig:stability_benchmark}, where the relative error of the eager LSE implementation rapidly approaches unity, while the DCR maintains controlled error growth.  The residual error at large $j$ is governed by the intrinsic condition number $\kappa$, confirming that the DCR has exhausted the representation-level improvement. Further gains require either higher precision arithmetic or a fundamentally different summation order.

\subsection{Execution Latency and Amortized Parameter Sweeps}

We conclude our empirical analysis by examining execution time. Runtimes were measured using Julia's \texttt{BenchmarkTools}, reporting the median steady-state latencies over warmed-up repeated executions to bypass JIT-compilation overhead.

We compare the DCR projection against an eager numerical baseline that applies the Log-Sum-Exp (LSE) algorithm directly to precomputed logarithmic $q$-factorial tables, reducing the summation to rapid sequential array access. While both engines ultimately utilize LSE to safely accumulate the final alternating sum, the DCR engine pre-conditions the calculation. By algebraically canceling the massive factorials into sparse cyclotomic ratios prior to numerical evaluation, the DCR compresses the dynamic range of the intermediate terms. The resulting stability thresholds across hardware and arbitrary-precision regimes are summarized in Table~\ref{tab:latency}.

At standard 64-bit precision, the direct eager LSE evaluation achieves very low latency. However, as demonstrated in Section~\ref{subsec:stability_test}, its numerical reliability deteriorates rapidly with increasing spin due to catastrophic cancellation, leading to 100\% loss of accuracy beyond $j \approx 90$ (see Figure~\ref{fig:stability_benchmark}). The algebraic preconditioning inherent in the DCR delays this instability, extending the range of reliable double-precision evaluation to moderately larger spins. Nevertheless, for sufficiently large $j$, both approaches require promotion to software-emulated arbitrary precision to maintain accuracy. In this regime, the amortized advantages of the deferred representation become increasingly significant, as the cost of projection grows more slowly than direct high-precision summation.

In the 256-bit regime, both methods exhibit comparable scaling. The optimized LSE baseline benefits from direct accumulation over precomputed high-precision tables, yielding $1.61$ ms at $j=500$. The DCR projection achieves similar performance ($2.63$ ms at $j=500$), indicating that traversal and evaluation of the sparse cyclotomic representation introduces only modest overhead relative to direct (raw tables) summation. This is consistent with the complexity analysis, where both approaches ultimately reduce to arithmetic operations in the target field.

The primary computational advantage of the deferred representation emerges in repeated evaluations across parameter space. In direct methods, each new value of the deformation parameter $q$ requires recomputation of the full summation, including reconstruction of auxiliary tables. In contrast, the DCR decouples combinatorial structure from numerical evaluation. The structure is constructed once, and subsequent evaluations require only the projection step. For moderate spins within the hardware precision limit (e.g., $j \leq 100$), this reduces marginal evaluation cost to the microsecond regime ($0.7\,\mu\mathrm{s}$ at $j=50$, $1.4\,\mu\mathrm{s}$ at $j=100$). This enables efficient exploration of continuous parameter regimes that would otherwise require repeated full recomputation.

\medskip

In summary, the latency results align with the theoretical framework developed in Section~\ref{sec:complexity_error}. The deferred cyclotomic representation isolates combinatorial complexity into a compact, reusable structure, while the remaining computational cost is governed by arithmetic in the target field. This separation does not eliminate the intrinsic cost of high-precision evaluation, but it avoids redundant recomputation and enables efficient amortization across repeated evaluations.

\section{Conclusion and Outlook}
\label{sec:conclusion}

We have introduced the deferred cyclotomic representation (DCR), a structural reformulation for the evaluation of finite $q$-hypergeometric series and $q$-deformed amplitudes. Instead of treating quantum amplitudes as a dense polynomial representation, the DCR encodes their multiplicative structure in a sparse integer exponent representation over cyclotomic factors. Evaluation in any target field is then realized via projection as a ring homomorphism applied to this canonical combinatorial object.

The computational advantages of this framework are structural rather than merely algorithmic. By executing exact factorial cancellations at the exponent level prior to projection, the DCR functions as a universal algebraic preconditioner. For exact symbolic computation, it eliminates the expression swell inherent in rational polynomial representations. For numerical evaluation, it substantially compresses the intermediate dynamic range, systematically delaying the onset of catastrophic cancellation. Furthermore, as established in our macroscopic analysis (see Section~\ref{subsec:state_sums}), compiling the DCR isolates and amortizes the combinatorial complexity across entire triangulations. This separation of structure and evaluation renders large-scale topological state sums computationally tractable.

The behavior of the DCR in the semiclassical regime deserves particular emphasis. While large-spin and high-level $k$ limits induce high oscillations and severe intrinsic cancellation, the DCR systematically suppresses the additional numerical amplification caused by conventional summand representations. By performing exact cyclotomic cancellations prior to numerical evaluation, the method significantly extends the range of parameters computable in finite precision. Consequently, the dominant limitation in the local amplitude evaluation is no longer algebraic expression swell, but the fundamental oscillatory nature of the amplitude itself. In this sense the DCR should be viewed as an enabling representation, isolating the intrinsic numerical barriers and opening a substantially larger computational window to quantitatively test Regge-type asymptotic formulas against exact quantum amplitudes. 

Beyond its computational advantages, the DCR provides a new structural perspective on $q$-deformed amplitudes. The deformation parameter $q$ does not enter the internal representation, but appears only through evaluation. As a consequence, amplitudes at different deformation parameters arise as evaluations of a single underlying combinatorial object. Within this framework, admissibility at roots of unity and the classical limit $q \to 1$ are naturally expressed in terms of cyclotomic exponent support, rather than as external analytical constraints. This provides a direct link between quantum and classical regimes at the level of factorization structure.

Several directions for further investigation emerge from this work. On the structural side, it is natural to ask whether coherence relations in quantum recoupling theory, such as orthogonality and the Biedenharn--Elliott identity, admit a formulation directly at the level of exponent data prior to evaluation. On the algorithmic side, extending the DCR to higher-rank quantum groups and more general tensor network contractions \cite{Dittrich_2016,Asante:2024eft} will require maintaining sparsity under increasingly complex combinatorial operations. 

More broadly, this work highlights the role of representation design in computational mathematics and physics. By aligning the representation with the intrinsic multiplicative structure of the problem, it is possible to simultaneously improve exact computability, numerical stability, and algorithmic efficiency. The deferred cyclotomic representation provides one concrete realization of this principle in the context of $q$-hypergeometric series, and suggests that similar approaches may be applicable to broader classes of special functions and quantum amplitudes.

Ultimately, the deferred cyclotomic framework establishes that representation design prior to evaluation can fundamentally alter the computational profile of topological quantum field theories. By aligning the data structure with the true multiplicative algebra of the deformation, we obtain a representation that is simultaneously more computable, numerically stable, and theoretically transparent.

\section*{Code Availability}
The methods developed in this work are implemented natively in the open-source Julia package \href{https://github.com/sethkasante/QRecoupling.jl}{\texttt{QRecoupling.jl}} \cite{QRecoupling_pkg}. The package extends the deferred cyclotomic architecture beyond the $6j$-symbol to encompass quantum $3j$-symbols, general fusion tensors, and braiding $R$-matrices, and general $q$-hypergeometric series. It provides a unified, high-performance interface for exact algebraic evaluation, root-of-unity projection, and arbitrary-precision floating-point computation. The repository includes a \texttt{benchmarks/} directory with scripts that reproduce the tables and figures in this paper. 

\section*{Acknowledgments}
The author is grateful for the funding and support of the Atlantic Association for Research in the Mathematical Sciences (AARMS). I would also like to thank Bianca Dittrich, Sebastian Steinhaus, and Edward Wilson-Ewing for their support and encouragement throughout this work. This work was supported in part by the Natural Sciences and Engineering Research Council of Canada.

\bibliography{refs}

\end{document}